\documentclass{article}

\usepackage{amsmath, amssymb, amsthm}  
\usepackage{graphicx}  
\usepackage{hyperref}  
\usepackage{geometry}  
\usepackage{enumitem}  
\usepackage{booktabs}  
\usepackage{xcolor}  

\usepackage{bm}
\usepackage{bbm}
\usepackage{commath}
\usepackage{hyperref}
\usepackage{cleveref}
\usepackage{nicefrac}
\usepackage{physics}
\usepackage{subcaption}
\usepackage{mathtools}
\usepackage{algpseudocode}
\usepackage{algorithm}
\usepackage{tikz}
\usepackage{tikz-3dplot}
\usepackage{thmtools, thm-restate}

\newcommand{\cX}{\mathcal{X}}

\newcommand{\bbR}{\mathbb{R}}

\newcommand{\Ex}{\mathbb{E}}
\newcommand{\eps}{\varepsilon}

\newcommand{\IGNORE}[1]{}

\newcommand{\R}{\mathbb{R}}
\newcommand{\E}{\mathbb{E}}
\newcommand{\A}{\mathcal{A}}
\newcommand{\X}{\mathcal{X}}

\newcommand{\inner}[2]{\langle #1, #2 \rangle}
\newcommand{\supdiff}{\partial^+}
\newcommand{\subdiff}{\partial^-}
\newcommand{\wbar}{\overline{w}}

\newcommand{\ba}{\bm{a}}

\renewcommand{\ip}[2]{\langle #1, #2\rangle}

\newcommand{\conv}{\mathtt{conv}}

\newtheorem{observation}{Observation}

\newtheorem{theorem}{Theorem}
\newtheorem{lemma}{Lemma}
\newtheorem{remark}{Remark}
\newtheorem{assumption}{Assumption}

\newtheorem{definition}{Definition}
\newtheorem{cor}{Corollary}
\newtheorem*{theorem*}{Theorem}

\newtheorem{proposition}{Proposition}

\newcommand{\bR}{\mathbb{R}}

\let\tilde\widetilde

\newcommand{\LIN}{\mathsf{LIN}}
\newcommand{\FB}{\mathsf{FB}}

\newcommand{\cA}{\mathcal{A}}

\usepackage[suppress]{color-edits}
\addauthor[sz]{sz}{red}
\usepackage{natbib}

\title{Distributional Robustness of Linear Contracts}
\author{Shiliang Zuo \\ szuo.rs@gmail.com}
\date{}

\begin{document}

\maketitle

\begin{abstract}
Linear contracts—payments proportional to observable outcomes—are ubiquitous in practice, yet optimal contract theory often prescribes complex, nonlinear structures. We provide a distributional robustness justification for linear contracts. We study a principal-agent problem where the agent exerts costly effort across multiple tasks, generating a stochastic signal upon which the principal conditions payment. The principal faces distributional ambiguity: she knows the expected signal for each effort level, but not the full distribution. She seeks a contract maximizing her worst-case payoff over all distributions consistent with this partial knowledge.
Our main result shows that linear contracts are optimal for such a principal. For any contract, there exists a linear contract achieving weakly higher worst-case payoff. The proof introduces the concavification approach built around the notion of self-inducing actions—actions where an affine contract simultaneously induces the action as optimal and supports the concave envelope of payments from above. We show that self-inducing actions always exist as maximizers of the gap between the concave envelope and agent's cost function. 
We extend these results to multi-party settings. In common agency with multiple principals, we show that affine contracts improve all principals' worst-case payoffs. In team production with multiple agents, we establish a complementary necessity result: if any agent's contract is non-affine, the unique ex-post robust equilibrium is zero effort. Finally, we show that homogeneous utility and cost functions yield tractable characterizations, enabling closed-form approximation ratios and a sharp boundary between computational tractability results. 
\end{abstract}

\section{Introduction}
\label{sec:intro}

Contract theory studies how a principal can motivate a self-interested agent to exert effort through outcome-contingent incentives. Such contracts are pervasive across economic settings---from sales commissions and executive compensation to sharecropping and revenue-sharing agreements. In principle, optimal contracts can be highly complex, specifying a distinct payment for each possible observable outcome. In practice, however, linear contracts---payments proportional to observable outcomes---are remarkably prevalent. What justifies this simplicity? One answer lies in \emph{robust contract design}: \cite{carroll2015robustness} showed that linear contracts are max-min optimal when the principal faces ambiguity about the agent's technology set---which actions are available to the agent.

In this work we study linear contracts from a distributional robustness lens. We study a principal-agent problem where the agent exerts costly effort $a$ that generates a stochastic signal $x$, upon which the principal conditions payment. The principal designs a contract $w(x)$ to maximize her payoff—the expected benefit from the agent's effort minus the expected payment. The principal faces \emph{distributional ambiguity}: she knows the expected signal for each effort level, $\Ex[x \mid a]$, but not the full distribution of signals. She seeks a contract that maximizes her worst-case payoff over all distributions consistent with this partial knowledge.

Our ambiguity concerns a different primitive compared with \cite{carroll2015robustness}: not what the agent can do, but how effort maps to stochastic outcomes. Our setting is motivated by \emph{distributional robustness} in optimization and mechanism design: when a decision-maker knows only partial distributional information (moments, support, or shape constraints), she optimizes against worst-case distributions consistent with this knowledge \citep{bergemann2005robust, carrasco2018optimal, bachrach2022distributional}. Our setting imports this paradigm into contracting with moral hazard. 

\subsection{Our Setting and Results} 
We study a principal who contracts with an agent exerting effort across $d$ tasks. The agent's effort $a$ generates a stochastic $d$-dimensional signal $x$, upon which the principal conditions payment via a contract $w(x)$. The principal faces \emph{distributional ambiguity}: she knows the expected signal for each effort level, $\Ex[x \mid a] = a$, but not the full distribution. She seeks a contract maximizing her worst-case payoff over all distributions consistent with this partial knowledge.


Our main result (\Cref{thm:robustness}) shows that linear contracts are optimal for such a principal: for any contract $w$, there exists a linear contract $\phi$ with weakly higher worst-case payoff: $V_P(\phi) \geq V_P(w)$.


\paragraph{The concavification approach.} The starting observation is that affine contracts are \emph{distribution-invariant}: the expected payment $\Ex[\alpha + \beta^\top x \mid a] = \alpha + \beta^\top \Ex[x \mid a]$ depends only on the mean. This immediately suggests that affine contracts are natural candidates for robust optimality—their performance is unaffected by distributional ambiguity. The question is whether, for any contract $w$, one can find an affine contract that dominates it in worst case.

In one dimension, the answer is immediate. Suppose the outcome space were $[0, x_{\max}]$, then any distribution with mean taking value in $[0, x_{\max}]$ can be achieved by a two-point distribution supported on the boundary. So for any contract $w$, the affine contract interpolating $w$ at $0$ and $x_{\max}$ matches $w$'s expected payment under the boundary distribution, and by distribution-invariance, performs at least as well under all other compatible distributions. One of the main result of \cite{dutting2019simple} was in fact the robustness of linear contracts when outcomes are scalars, for which they gave a more complicated proof. However their proof still relied on the ordering of the real line and existence of the extreme points, neither of which extends to higher dimensions. Hence, it is unclear to the author as to whether their proof is necessary or generalizable. 

In multiple dimensions, the set of distributions compatible with a given mean is vastly richer, and there is no canonical set of extreme points. We develop an approach based on \emph{concavification} to handle this complexity. The key object is the \emph{concave envelope} $\bar{w}$ of the contract $w$: for any action $a$, $\bar{w}(a)$ equals the maximum expected payment the agent can extract when choosing $a$, over all compatible distributions. The central concept is a \emph{self-inducing action}—an action $a^\sharp$ where some affine contract $\psi$ simultaneously (i) induces $a^\sharp$ as the agent's optimum, and (ii) supports $\bar{w}$ from above at $a^\sharp$. When these conditions align, $\psi$ replicates $w$'s performance under a distribution placing mass where $\psi$ touches $w$, and distribution-invariance implies $\psi$ dominates $w$ in worst case. The crux is that self-inducing actions \emph{always exist}: they are precisely the maximizers of $\bar{w}(a) - c(a)$, where $c$ is the agent's cost function.

\paragraph{Extensions to multi-party settings.} The concavification approach extends beyond bilateral contracting. In \emph{common agency}, where multiple principals contract with a shared agent, we apply concavification to the aggregate contract. The self-inducing action for the aggregate decomposes to yield affine contracts for each principal, and we show that for any contract profile $(w_{P_1}, w_{P_2}, \dots)$ used jointly by all principals, there exists an affine profile improving all principals' worst-case payoffs (\Cref{thm:robust-common-agency}). In \emph{team production}, where a principal contracts with multiple agents, we obtain a complementary \emph{necessity} result: when agents themselves face distributional uncertainty, non-affine contracts create payment uncertainty that drives ambiguity-averse agents toward zero effort. If any agent's contract is non-affine, the unique ex-post robust equilibrium is zero effort (\Cref{thm:team-robust}).

\paragraph{Tractability under homogeneity.} While optimal linear contracts are generally difficult to characterize, we show that homogeneous utility and cost functions yield tractable characterizations; based on these, we derive approximation guarantees and computational results for linear contracts across all three settings.

\subsection{Related Work}

\paragraph{Contract design and Robustness.}
The study of contract design is a foundational topic in microeconomics, dating back to \citet{holmstrom1979moral}. Other work has examined related and richer settings such as multitasking \cite{holmstrom1991multitask}, common agency \cite{bernheim1986common}, and team production \cite{holmstrom1982moral}, which provide key background for the setting we study here.

Our robustness results builds on the foundational study of robust contracts by \citet{carroll2015robustness}, who considers a principal facing ambiguity about the agent's technology set---the set of action-outcome distributions available to the agent. Carroll shows that linear contracts are max-min optimal when the principal knows only a subset of available technologies. This robustness rationale for linear contracts has been extended to richer environments: \citet{marku2024robust} studies common agency with technology set ambiguity, and \citet{dai2022robust} extends the framework to team production settings. Our work complements this literature by considering a different source of ambiguity: rather than uncertainty about which technologies the agent can access, we study uncertainty about the distributional mapping from effort to outcomes, assuming the principal knows only the mean outcome structure. This framework was originally introduced by \citet{dutting2019simple} in the single-dimensional setting; our work generalize this to the multitasking scenario as well as common agency and team production settings. This distributional robustness perspective yields similar conclusions---optimality of linear contracts---but via different techniques based on concavification. 


\paragraph{Distributionally robust mechanism design.}
Our distributional robustness approach connects to the broader literature on robust mechanism design (e.g., \citep{bergemann2005robust}). Particularly relevant is work on distributionally robust mechanism design, where the designer knows only partial information about bidders' value distributions \citep{bachrach2022distributional, carrasco2018optimal}. These papers typically assume the designer knows moments or support of the distribution and optimizes against worst-case realizations. Our setting differs in that we study bilateral contracting with moral hazard rather than auction design with adverse selection, but the max-min methodology is similar. 


\paragraph{Computational aspects of contract design.}
A growing literature studies computational questions in contract design with combinatorial structures. For single-agent settings, \citet{dutting2024combinatorial}, \cite{dutting2025combinatorial} and \cite{dutting2024combinatorial} analyze the complexity of computing optimal contracts with combinatorial action spaces. Multi-agent contract design introduces additional computational challenges, this problem was explored in \cite{dutting2023multi}, \cite{dutting2025multi}, \cite{deo2024supermodular}. These papers focus on discrete action spaces. Our computational results on team production complements this line of work in the sense that we study continuous action spaces, and offers a partial characterization on the tractability boundaries. Besides computational aspect of contract design, several works (\cite{dutting2019simple}, \cite{castiglioni2021bayesian}, \cite{alon2023bayesian}) also study approximation guarantees of linear contracts in the bilateral single-task setting. 

\subsection{Paper organization.} \Cref{sec:problem-statement} introduces the bilateral multitasking model. \Cref{sec:robustness} develops concavification and proves our main result. \Cref{sec:multilateral} extends to common agency and team production. \Cref{sec:homogeneous} characterizes solutions under homogeneity. Proofs appear in the appendix.

\section{Bilateral Principal-Agent Problem}
\label{sec:problem-statement}
In this section we describe a basic setup of a bilateral principal-agent problem with a single principal and single agent. The two parties enters a contractual relation. The agent's effort (or action) space is $\mathcal{A}$. We assume $\mathcal{A}$ to be compact and downward-closed. When the agent takes action $a \in \mathcal{A}$, he incurs a private cost $c(a)$. We assume the cost function $c: \mathcal{A} \to \mathbb{R}^+$ is strictly increasing, strictly convex, and twice differentiable on its domain, with $c(0) = 0$. We further assume the agent's action space $\cA$ to be downward closed. The agent's effort generates a stochastic signal $x \in \mathcal{X}$, observable by both parties. We denote $F(a)$ as the mapping from the action space $\cA$ to a distribution over the signal space $\mathcal{X}$. The signal space can either be discrete or continuous, our result works either way. In the multitask principal-agent setting we have $\mathcal{A}, \mathcal{X} \subset (\mathbb{R}^+)^d$, here $d$ represents the number of tasks, and for an effort vector $a\in\mathcal{A}$, the $i$-component $a_i$ is interpretated as the effort in the $i$-th task, and $x_i$ is interpretated as the signal for the $i$-th task. 

The principal offers a contract $w: \mathcal{X} \to \mathbb{R}^+$, specifying a non-negative payment for each realized signal. The non-negativity constraint reflects limited liability protection for the agent. Given a contract $w(\cdot)$ and the distribution mapping $F(\cdot)$ from actions to the signal space, the agent's expected utility is
\[
V_A(a; w(\cdot), F(\cdot)) = \Ex_{x \sim F(a)}[w(x)] - c(a),
\]
The agent takes an action maximizing his surplus:
\[
a^*(w(\cdot), F(\cdot)) \in \arg\max_a V_A(a; w(\cdot), F(\cdot)). 
\]
We assume ties are broken in favor for the principal. We may simply write $a^*$ when $w$ and $F$ are clear from context. 

Denote $u: \mathcal{A} \to \mathbb{R}^+$ as the principal's gross benefit from action $a$. The principal's expected utility (or surplus) is the expected gross benefit minus the expected payment: 
\begin{align*}
V_P(w(\cdot), F(\cdot)) = u(a^*) - \Ex_{x \sim F(a^*)}[w(x)],
\end{align*}

The timing of the interaction is as follows:
\begin{enumerate}
    \item The principal offers a contract $w: \mathcal{X} \to \mathbb{R}^+$.
    \item The agent chooses effort $a^* \in \mathcal{A}$ to maximize $V_A(a; w(\cdot), F(\cdot))$. 
    \item An outcome $x \in \mathcal{X}$ is realized according to the distribution $F(a)$.
    \item The agent receives payment $w(x)$; the principal receives surplus $u(a) - w(x)$. 
\end{enumerate}

\paragraph{Linear and Affine Contracts}

A \emph{linear contract} is parameterized by $\phi \in (\mathbb{R}^+)^d$ and specifies payment
\[
\phi(x) = \langle \phi, x \rangle = \sum_{i=1}^d \phi_i x_i.
\]

An \emph{affine contract} is parameterized by $\psi = (\psi_0, \psi_1, \ldots, \psi_d) \in \mathbb{R}^{d+1}$ and specifies payment
\[
\psi(x) = \psi_0 + \sum_{i=1}^d \psi_i x_i.
\]

\section{Robustness of Linear Contracts}
\label{sec:robustness}
We consider a principal who faces distributional uncertainty: she knows only the mean outcome $\E[x|a]$ for each action $a$, but not the full distribution of signals. The principal is ambiguity-averse and seeks to maximize her worst-case payoff over all distributions consistent with the known means. 
\begin{definition}[Compatible Distributions]
A distribution mapping $F(a) : \A \mapsto \Delta(\X)$ is \emph{compatible} if for each $a \in \A$,
\[
\E_{x \sim F(a)}[x \mid a] = a.
\]
\end{definition}
\begin{remark}[Generality of the Mean Structure]
Our definition above assumes $\E[x \mid a] = a$, which may appear to require unbiasedness. This is not essential. Our results extend to any known mean structure $\E[x \mid a] = \mu(a)$ satisfying mild regularity conditions---for example, $\mu$ concave in $a$. The proofs require only straightforward modifications. What is crucial is that the principal knows the mean outcome as a function of effort, not necessarily that effort and expected outcome coincide. We only choose this specific form to simplify the exposition.
\end{remark}
The principal's worst-case payoff from contract $w$ is
\begin{equation}\label{eq:worst-case}
V_P(w(\cdot)) = \min_{F \text{ compatible}} V_P(w(\cdot); F(\cdot)).
\end{equation}
The principal's problem is to choose a contract $w$ to maximize this worst-case payoff:
\[
\max_{w: \X \to \R_+} V_P(w(\cdot)).
\]
We can now state our main result.
\begin{theorem}[Robustness of Linear Contracts]\label{thm:robustness}
Suppose $\A$ is compact. For any contract $w : \X \to \R_+$, there exists a linear contract $\phi \in (\R_+)^d$ such that
\[
V_P(\phi) \geq V_P(w).
\]
Further, there exists a linear contract $\phi^*$ maximizing $V_P(\phi)$.
\end{theorem}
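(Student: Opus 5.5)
We follow the concavification route sketched in the introduction. For a contract $w$, define the concave envelope $\bar w:\A\to\R_+$ by
\[
\bar w(a)=\sup\bigl\{\E_{x\sim G}[w(x)] : G\in\Delta(\X),\ \E_G[x]=a\bigr\},
\]
which is the smallest concave function on $\conv(\X)$ dominating $w$, restricted to $\A$. Since $\bar w-c$ is upper semicontinuous (after taking the closure of $\bar w$ if necessary) and $\A$ is compact, it attains its maximum at some point $a^\sharp$. This $a^\sharp$ will serve as our self-inducing action.

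\textbf{Step 1: a supporting affine contract.} $\bar w$ is concave and $c$ is convex and differentiable. A maximizer $a^\sharp$ of $\bar w - c$ therefore satisfies $0\in\partial^+\bar w(a^\sharp)-\nabla c(a^\sharp)$ in the normal-cone sense relative to $\A$. So we can pick a supergradient $\beta$ of $\bar w$ at $a^\sharp$ with $\beta-\nabla c(a^\sharp)$ in the normal cone of $\A$ at $a^\sharp$. Set $\psi(x)=\psi_0+\langle\beta,x\rangle$ with $\psi_0=\bar w(a^\sharp)-\langle\beta,a^\sharp\rangle$. Then $\psi\ge\bar w\ge w$ on $\conv(\X)$, with equality $\psi(a^\sharp)=\bar w(a^\sharp)$, and $a^\sharp$ maximizes $\psi(a)-c(a)$ over $\A$. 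Indeed, for all $a$, $\psi(a)-c(a)\ge \bar w(a)-c(a)$, with equality at $a^\sharp$, which is the maximizer of the right-hand side. Hence $\psi$ induces $a^\sharp$ under any compatible $F$, and this holds for every compatible $F$ because affine payments depend only on the mean.

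\textbf{Step 2: comparing worst cases.} We build an adversarial compatible $F^w$ for $w$. At each $a$, let $F^w(a)$ be a (near-)maximizer of the expected payment of $w$ among distributions with mean $a$. The agent then faces utility $\bar w(a)-c(a)$ and chooses $a^\sharp$, or some maximizer we may take to be $a^\sharp$; tie-breaking in favor of the principal needs care, and we would either pick $a^\sharp$ to be the principal-best maximizer or argue via $\varepsilon$-approximations. The principal's payoff is then $u(a^\sharp)-\bar w(a^\sharp)=u(a^\sharp)-\psi(a^\sharp)=V_P(\psi)$, so $V_P(w)\le V_P(\psi)$.

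\textbf{Step 3: from affine to linear.} Next we show $\psi_0$ may be taken $\le 0$ and $\beta\ge 0$. Since $\A$ is downward closed, $c$ is increasing, and $w\ge 0$, the function $\bar w$ is nondecreasing in directions that stay in $\A$. From this we would argue that the supergradient can be chosen nonnegative, projecting onto $\R_+^d$ and using downward-closedness. Replacing $\psi$ by the linear $\phi=\beta$ lowers payments by $\psi_0$ without changing incentives. If $\psi_0\ge0$, then $V_P(\phi)=V_P(\psi)+\psi_0\ge V_P(\psi)$.

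The main obstacle is that $\psi_0$ might be negative, i.e.\ the supporting hyperplane crosses below zero at the origin, so that dropping the intercept raises payment. Here we would exploit limited liability: $\bar w(0)\ge w(0)\ge0$, and concavity then gives $\psi_0=\psi(0)\ge\bar w(0)\ge 0$ provided $0\in\conv(\X)$. We will assume, or justify from $\A$ downward closed and $\A\subseteq\conv(\X)$ via compatibility, that $0$ is such a point. Existence of an optimal $\phi^*$ then follows by showing $V_P(\phi)$ is upper semicontinuous in $\phi$, using the principal-favorable tie-breaking and Berge's maximum theorem with strictly convex $c$. We also restrict to a compact set of $\phi$, since large $\phi$ gives payoff at most $u_{\max}-\langle\phi,a^*\rangle$, which is dominated by the payoff at $\phi=0$ once $\phi$ is large enough.
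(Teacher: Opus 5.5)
Your Steps 1--2 are the paper's argument: maximize $\bar w-c$ over $\A$, take $\beta\in\partial^+\bar w(a^\sharp)$ with $\beta-\nabla c(a^\sharp)\in N_{\A}(a^\sharp)$, and let the adversary place $F(a^\sharp)$ on points where $\psi=w$. There are two local slips in those steps.

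First, the sentence meant to show that $a^\sharp$ maximizes $\psi-c$ is a non sequitur. From $\psi-c\ge\bar w-c$, with equality at the maximizer of $\bar w-c$, nothing follows about where $\psi-c$ peaks; the inequality points the wrong way. The correct reason is the condition you wrote just before it: $\psi-c$ is concave, and its supergradient $\beta-\nabla c(a^\sharp)$ at $a^\sharp$ lies in $N_{\A}(a^\sharp)$.

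Second, the tie-breaking worry in Step 2 is moot. $\bar w-c$ is strictly concave, so $a^\sharp$ is its unique maximizer. As in the paper, it then suffices to specify $F(a^\sharp)$ on the contact set. Any other action gives the agent at most $\bar w(a)-c(a)<\bar w(a^\sharp)-c(a^\sharp)$ under every compatible distribution.

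The genuine gap is Step 3. You need $\beta\in\R_+^d$ for $x\mapsto\langle\beta,x\rangle$ to be an admissible linear contract, and you derive this from the claim that $\bar w$ is nondecreasing. That claim is false. Take $\X=[0,1]^d$ and $w=\mathbf{1}\{x=0\}$; then $\bar w(a)=1-\max_i a_i$. With $c(a)=\|a\|^2$ the maximizer is $a^\sharp=0$, and every $\beta\le 0$ with $\sum_i|\beta_i|\le 1$ meets your Step~1 conditions, so nothing forces $\beta\ge0$.

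The paper avoids this by discarding the slope of $\psi$ altogether and using $\phi=\nabla c(a^\sharp)$:
\begin{itemize}
\item $\phi$ is nonnegative because $c$ is increasing.
\item $\phi$ induces $a^\sharp$ uniquely, because $\langle\nabla c(a^\sharp),a\rangle-c(a)$ is strictly concave with zero gradient at $a^\sharp$.
\item $\phi$ pays weakly less than $\psi$. Testing your normal-cone condition at $a=a^\sharp/2\in\A$ (downward-closedness) gives $\langle\beta-\nabla c(a^\sharp),a^\sharp\rangle\ge0$. Combined with $\psi_0\ge0$ (your argument via $0\in\conv(\X)$ is the right one), this yields $\langle\nabla c(a^\sharp),a^\sharp\rangle\le\psi_0+\langle\beta,a^\sharp\rangle$.
\end{itemize}

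Your projection idea can also be rescued under coordinatewise downward-closedness. There, $\beta_i<0$ forces $a^\sharp_i=0$, and raising such $\beta_i$ to $0$ preserves both the supergradient property (since $x_i\ge0$ on $\X$) and the normal-cone condition. But this argument has to replace the monotonicity claim.

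Finally, your Berge-type sketch for the existence of $\phi^*$ goes beyond the paper, which never proves that clause. Note that it needs $u$ to be bounded and upper semicontinuous, which the paper does not assume.
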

The remainder of this section develops the proof. 

    
\paragraph{Step 1: Distribution Invariance of Affine Contracts}
We begin with the following observation.
\begin{observation}\label{obs:invariance}
Let $F$ be a compatible distribution mapping. Under an affine contract $\psi(x) = \psi_0 + \sum_{i=1}^{d} \psi_i x_i$, the agent's expected utility for any action $a$ is
\[
V_A(a; \psi) = \sum_{i=1}^{d} \psi_i \cdot a_i + \psi_0 - c(a),
\]
which depends only on the action $a$ and not on the particular compatible distribution mapping $F$.
\end{observation}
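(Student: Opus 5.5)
The statement immediately above is Observation~\ref{obs:invariance}, and the plan is a direct computation using linearity of expectation together with the compatibility condition. Fix a compatible distribution mapping $F$ and an affine contract $\psi(x) = \psi_0 + \sum_{i=1}^d \psi_i x_i$. By definition, the agent's expected utility from action $a$ is
\[
V_A(a; \psi, F) = \E_{x \sim F(a)}[\psi(x)] - c(a).
\]
The only task is to evaluate the expectation term.

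\textbf{Step 1.} Expand the expectation by linearity:
\[
\E_{x\sim F(a)}[\psi_0 + \textstyle\sum_i \psi_i x_i] = \psi_0 + \textstyle\sum_i \psi_i \E_{x\sim F(a)}[x_i].
\]
The constant $\psi_0$ passes through because $F(a)$ is a probability distribution.

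\textbf{Step 2.} Apply compatibility componentwise. Since $\E_{x\sim F(a)}[x] = a$, we have $\E_{x \sim F(a)}[x_i] = a_i$ for every task $i$. Substituting this and subtracting $c(a)$ gives
\[
V_A(a;\psi) = \sum_i \psi_i a_i + \psi_0 - c(a).
\]
This expression involves only $\psi$, $a$, and $c$, so it is the same for every compatible $F$, which is the claim.

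\textbf{Main obstacle.} The only point needing care is integrability. Linearity of expectation requires that each $\E[x_i]$ exist, and this is guaranteed because compatibility asserts that the mean exists and equals $a$. The argument also uses that the contract is affine in $x$, so that expectation commutes with it. No other step is substantive.

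A direct consequence, which the proof can note, is that the agent's best response $a^*$ and the principal's payoff $u(a^*) - (\psi_0 + \langle \psi_{1:d}, a^*\rangle)$ are likewise independent of $F$. Hence $V_P(\psi)$ in \eqref{eq:worst-case} is attained by every compatible $F$.
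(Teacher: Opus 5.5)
Your proof is correct and is essentially the paper's argument: linearity of expectation together with the compatibility condition $\E_{x\sim F(a)}[x]=a$ gives $\E_{x\sim F(a)}[\psi(x)] = \psi_0 + \inner{\psi_{1:d}}{a}$, which does not depend on $F$. Your remark that the best response and the principal's payoff are therefore also independent of $F$ matches how the paper uses this observation in the proof of \Cref{thm:robustness}.
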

This observation is immediate from the definition of compatible distributions: since $\E_{x \sim F(a)}[x] = a$ for any compatible $F$, the expected payment $\E[\psi(x)] = \psi_0 + \inner{\psi_{1:d}}{a}$ depends only on the mean.
The implication is powerful: the principal's payoff under an affine contract is invariant across all compatible distribution mappings. Therefore, our strategy is to show that for any contract $w(\cdot)$, we can construct an affine contract $\psi$ and a specific compatible distribution $F$ such that $V_P(\psi; F) \geq V_P(w; F)$. By \Cref{obs:invariance}, this implies $V_P(\psi) \geq V_P(w)$, establishing that affine contracts are worst-case optimal.
\paragraph{Step 2: The Concave Envelope}
Let $w(\cdot)$ be any contract. We define the \emph{concave envelope} of $w$ as
\[
\wbar(a) = \sup \left\{ \sum_i \lambda_i w(x_i) : \lambda_i \geq 0, \sum_i \lambda_i = 1, \sum_i \lambda_i x_i = a \right\}.
\]
The concave envelope $\wbar(a)$ represents the maximum expected payment the agent can receive when taking action $a$, over all compatible distribution mappings. This is because any compatible distribution for action $a$ must have mean $a$, and can be expressed as a convex combination of outcomes in $\X$.
For each action $a$, consider the set of supergradients (i.e., the superdifferential) of $\wbar(\cdot)$ at $a$. Each supergradient corresponds to an upper-supporting hyperplane to the graph of $\wbar(\cdot)$. Each hyperplane $H$ that supports $\wbar(\cdot)$ from above naturally defines an affine contract $\psi_H$ satisfying
\[
\psi_H(x) \geq w(x) \quad \text{for all } x \in \X.
\]
Thus, $\psi_H(\cdot)$ upper bounds the original payment function $w(\cdot)$.
\paragraph{Step 3: Defining the Self-Inducing Action}
The key concept in our proof is a \emph{self-inducing action}---an action where the geometry of the concave envelope and the agent's cost function align perfectly.
\begin{definition}[Self-Inducing Action]\label{def:self-inducing}
An action $a \in \A$ is \emph{self-inducing} if there exists an affine contract $\psi$ such that:
\begin{enumerate}[label=(\roman*)]
    \item $\psi$ induces action $a$ (i.e., $a$ is the agent's best response under $\psi$), and
    \item $\psi$ supports $\wbar(\cdot)$ at the point $a$ from above, so that $\psi(x) \geq w(x)$ for all $x$ and $\psi(a) = \wbar(a)$.
\end{enumerate}
\end{definition}

\begin{figure}
    \centering
    \includegraphics[width=0.6\linewidth]{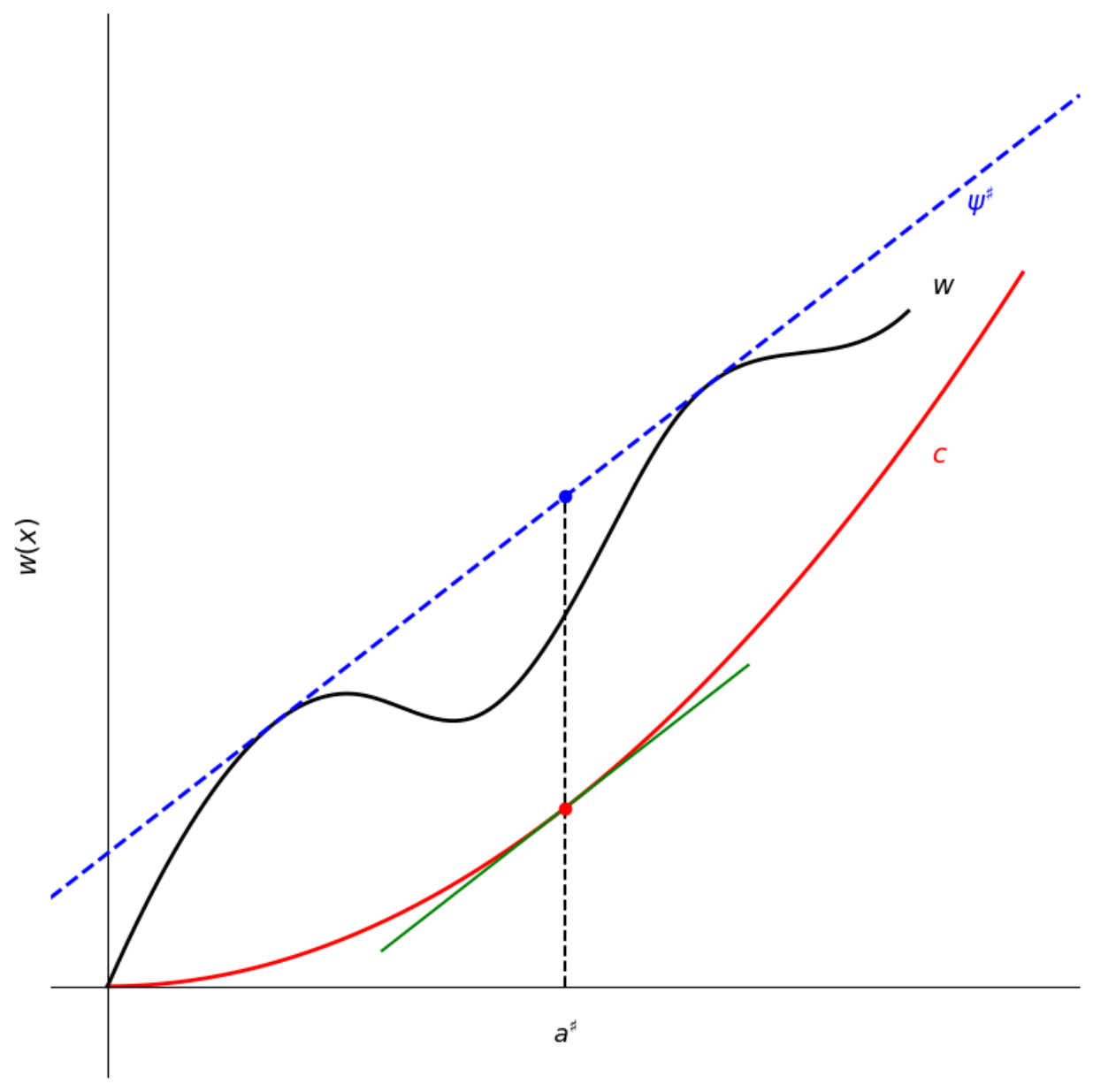}
    \caption{Illustration of self-inducing action. The affine contract $\psi^\sharp$ induces action $a^\sharp$, and bounds the concave envelope of $w$ from above at $a^\sharp$. }
    \label{fig:self-inducing-action}
\end{figure}

\Cref{fig:self-inducing-action} illustrates this concept geometrically: at a self-inducing action $a^\sharp$, the upper-supporting hyperplane $\psi^\sharp$ to the concave envelope $\wbar$ has slope equal to the marginal cost $\nabla c(a^\sharp)$. This alignment ensures that $\psi^\sharp$ both induces $a^\sharp$ (via the agent's first-order condition) and touches $\wbar$ from above at $a^\sharp$.
The following characterization makes the definition concrete.
\begin{proposition}[Characterization of Self-Inducing Actions]\label{prop:characterization}
An action $a \in \A$ is self-inducing if and only if
\[
\subdiff c(a) \cap \supdiff \wbar(a) \neq \varnothing,
\]
where $\subdiff$ and $\supdiff$ denote the subdifferential and superdifferential respectively.
\end{proposition}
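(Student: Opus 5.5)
We prove the two directions separately, writing an affine contract as $\psi(x)=\psi_0+\inner{g}{x}$ with slope $g=\psi_{1:d}$. Throughout we use \Cref{obs:invariance}: under $\psi$ the agent's utility from action $a'$ is $\psi_0+\inner{g}{a'}-c(a')$, whatever the compatible distribution.

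\emph{Only if.} Let $a$ be self-inducing with witness $\psi$. By (i), $a$ maximizes $\inner{g}{a'}-c(a')$ over $\A$. Rearranging gives $c(a')\ge c(a)+\inner{g}{a'-a}$ for all $a'\in\A$, so $g\in\subdiff c(a)$.

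By (ii), $\psi\ge w$ on $\X$. Since $\psi$ is affine, averaging gives $\psi(a')\ge \sum_i\lambda_i w(x_i)$ for every convex combination with $\sum_i\lambda_i x_i=a'$. Hence $\psi\ge\wbar$ wherever $\wbar$ is defined, and $\psi(a)=\wbar(a)$. Subtracting, $\wbar(a')\le \wbar(a)+\inner{g}{a'-a}$, so $g\in\supdiff\wbar(a)$.

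\emph{If.} Take $g\in\subdiff c(a)\cap\supdiff\wbar(a)$ and set $\psi_0=\wbar(a)-\inner{g}{a}$, so that $\psi(x)=\wbar(a)+\inner{g}{x-a}$.

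For (i): the subgradient inequality for $c$ rearranges to $\inner{g}{a}-c(a)\ge\inner{g}{a'}-c(a')$ for all $a'\in\A$. So $a$ is a best response under $\psi$, and the principal-favorable tie-breaking lets us declare it induced.

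For (ii): $\psi(a)=\wbar(a)$ by construction. The supergradient inequality gives $\psi(a')\ge\wbar(a')$ for every $a'$ in the domain of $\wbar$. Any $x\in\X$ is the trivial one-point combination, so $\wbar(x)\ge w(x)$, and therefore $\psi(x)\ge w(x)$.

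\emph{Main obstacle: domains and conventions.} $\wbar$ lives on $\conv(\X)$ while $c$ lives on $\A$, so the superdifferential must be read relative to the domain of $\wbar$. We would adopt the convention that $\supdiff\wbar(a)$ consists of $g$ with $\wbar(a')\le\wbar(a)+\inner{g}{a'-a}$ for all $a'\in\conv(\X)$. We also assume $a\in\conv(\X)$, which compatibility requires anyway.

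With this convention, the one point needing care is in the "only if" direction: the inequality $\psi\ge w$ on $\X$ must be extended to all of $\conv(\X)$. Affineness handles this, provided $\wbar$ is defined as a supremum over finite convex combinations as in the paper; if infinite supports were allowed, the same step would go through by linearity of expectation.

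Finally, we must check that (i) only requires $a$ to lie in the argmax, not to be its unique element, so the tie-breaking convention suffices. We expect none of these steps to be substantive beyond this bookkeeping.
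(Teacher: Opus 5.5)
Your proof is correct and follows the paper's route. The paper simply asserts that $\psi$ induces $a$ exactly when $\psi_{1:d}\in\subdiff c(a)$, and supports $\wbar$ at $a$ exactly when $\psi_{1:d}\in\supdiff\wbar(a)$; your explicit intercept $\psi_0=\wbar(a)-\inner{g}{a}$, the averaging step linking $\psi\ge w$ on $\X$ to $\psi\ge\wbar$ on $\conv(\X)$, and the domain conventions supply the details it leaves implicit (the appeal to tie-breaking is harmless but unnecessary, since strict convexity of $c$ makes $a$ the unique best response under $\psi$).
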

\begin{proof}
An affine contract $\psi$ induces action $a$ if and only if $\psi_{1:d} \in \partial c(a)$ (the agent's first-order optimality condition). Meanwhile, $\psi$ corresponds to a supergradient of $\wbar$ at $a$ if and only if $\psi_{1:d} \in \partial \wbar(a)$. The result follows immediately.
\end{proof}
Our first main lemma establishes that self-inducing actions always exist.
\begin{lemma}[Existence of Self-Inducing Action]\label{lem:existence}
For any contract $w : \X \to \R_+$, there exists a self-inducing action $a^\sharp \in \A$.
\end{lemma}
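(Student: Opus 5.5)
We take $a^\sharp$ to be a maximizer of $\wbar(a) - c(a)$ over $\A$, as the introduction suggests, and check the condition of \Cref{prop:characterization} at that point. The argument has three steps, carried out in this order.

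\textbf{Step 1: a maximizer exists.} $\wbar$ is concave and, where finite, dominates $w\geq 0$. We need it finite on $\A$, which holds if $w$ is bounded on the relevant compact region. If $\wbar\equiv+\infty$, then the worst-case payoff is $-\infty$ and the theorem is trivial, so we may exclude that case. Concavity then gives continuity on the relative interior, and upper semicontinuity can be arranged by taking the closure of the hypograph; this does not change the supporting hyperplanes. Since $c$ is continuous and $\A$ is compact, $\wbar - c$ is upper semicontinuous and attains its maximum at some $a^\sharp\in\A$.

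\textbf{Step 2: first-order condition at the maximizer.} Maximality gives, for all $a\in\A$,
\[
\wbar(a) \le \wbar(a^\sharp) + c(a) - c(a^\sharp).
\]
So the convex function $a\mapsto \wbar(a^\sharp)+c(a)-c(a^\sharp)$ majorizes the concave function $\wbar$ and touches it at $a^\sharp$. We now separate the hypograph of $\wbar$ from the epigraph of the shifted cost $c - c(a^\sharp) + \wbar(a^\sharp)$. These are two convex sets whose relative interiors are disjoint, because the inequality says $\wbar - c$ is maximized at $a^\sharp$. A separating affine function $\ell$ therefore satisfies $\wbar\le \ell \le c-c(a^\sharp)+\wbar(a^\sharp)$, with equality at $a^\sharp$.

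Alternatively, we can use the sum rule: $0\in\partial(c-\wbar)(a^\sharp)+N_\A(a^\sharp)$, and $c-\wbar$ is a sum of convex functions. Since $c$ is differentiable, $\partial(c-\wbar)(a^\sharp)=\nabla c(a^\sharp)+\partial(-\wbar)(a^\sharp)$. This yields $\nabla c(a^\sharp)\in\supdiff\wbar(a^\sharp)$, modulo normal-cone terms.

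Taking $\ell$ with slope $\beta$, the upper inequality says $\beta\in\subdiff c(a^\sharp)$, since $c$ is convex and $\ell$ lies below the shifted $c$ with contact at $a^\sharp$. The lower inequality says $\beta\in\supdiff\wbar(a^\sharp)$. \Cref{prop:characterization} then gives self-inducement, with $\psi(x)=\ell(x)$. In particular $\psi\ge\wbar\ge w$ on $\X$, and $\psi(a^\sharp)=\wbar(a^\sharp)$. The agent's utility under $\psi$ is $\ell(a)-c(a)\le \wbar(a^\sharp)-c(a^\sharp)$, with equality at $a^\sharp$, so $\psi$ induces $a^\sharp$. This conclusion is direct from the sandwich, so the affine-contract formulation does not even need the proposition.

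\textbf{Step 3: the boundary (main obstacle).} When $a^\sharp$ lies on the boundary of $\A$, or where $\wbar$ is only defined on $\conv(\X)$, separation requires a relative-interior condition. Without it the separating hyperplane could be vertical, meaning no finite slope $\beta$ exists. I would handle this as follows. Extend $c$ to all of $\bbR^d$ as a finite convex function, which is possible since $c$ is twice differentiable on the compact domain. Then the epigraph of the extended shifted cost has nonempty interior, and the domain of $c$ is everything. Standard separation (a Hahn--Banach variant of Fenchel duality) then yields a non-vertical hyperplane, because the effective domain of $c$ contains the domain of $\wbar$ in its interior.

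The remaining subtlety is that restricting to $\A$ adds a normal-cone term. The inequality only holds on $\A$, so I would also include the indicator of $\A$ in the convex side. Downward-closedness and the fact that $\X\subset(\bbR^+)^d$ should allow the normal-cone contribution to be absorbed, giving a nonnegative slope. Verifying this absorption is the step I expect to need the most care.
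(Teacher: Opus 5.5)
Your plan is the paper's plan. The appendix proof minimizes $c-\wbar$ over $\A=\conv(\X)$, asserts that a minimizer exists by compactness, and reads $\subdiff c(a^\sharp)\cap\supdiff\wbar(a^\sharp)\neq\varnothing$ off $0\in\partial(c-\wbar)(a^\sharp)$. It says nothing about semicontinuity, normal cones, or vertical separators. Your sandwich $\wbar\le\ell\le c-c(a^\sharp)+\wbar(a^\sharp)$ on $\A$ is the right way to make that last step honest, and, as you note, it gives (i) and (ii) directly. But the proposal does not close, and the step you leave open is not the one that matters.

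\textbf{The open step.} No normal-cone term needs absorbing.
\begin{itemize}
\item Condition (i) only asks that $a^\sharp$ be a best response \emph{within} $\A$, i.e.\ $\beta\in\partial(c+\iota_\A)(a^\sharp)=\nabla c(a^\sharp)+N_\A(a^\sharp)$. That is exactly what putting $\iota_\A$ on the convex side yields.
\item The lemma allows any affine $\psi\in\bbR^{d+1}$, so a nonnegative slope is irrelevant here; \Cref{lem:affine-to-linear} replaces $\psi^\sharp$ by $\nabla c(a^\sharp)$ later anyway.
\item What you actually need is a \emph{non-vertical} separator of $\operatorname{hyp}\wbar$ and $\operatorname{epi}(c+\iota_\A)$. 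Extending $c$ to $\bbR^d$ buys nothing once $\iota_\A$ is reinstated.
\end{itemize}
The relevant condition is $\operatorname{ri}\A\cap\operatorname{ri}\conv(\X)\neq\varnothing$. Under the paper's identification $\A=\conv(\X)$ this is automatic. Both sets then project onto the same convex set, so a properly separating hyperplane cannot be vertical; one exists because the relative interiors are disjoint. With that observation your Step 2 is complete.

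Without that condition the lemma is simply false, so downward-closedness cannot rescue it. Take $\X=\{x\in\bbR^2_+:x_1+x_2\le1\}$, $\A=\{(t,0):t\in[0,1]\}$ and $w(x)=\sqrt{x_2}$. Then $\wbar=w$, and no affine $\psi\ge w$ touches $\wbar$ anywhere on $\A$, because $\sqrt{x_2}$ has infinite slope in the $x_2$-direction along that edge.

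\textbf{Step 1.} Passing to the closure of the hypograph preserves the set of affine majorants. It does not preserve the contact requirement $\psi(a^\sharp)=\wbar(a^\sharp)$ in (ii). For a counterexample, take $\X=\A=[0,1]$, $w(x)=x$ on $[0,1)$, $w(1)=0$, and $c(a)=a^2/10$. Then $\wbar-c$ has no maximizer, and one checks that no self-inducing action exists. You need a hypothesis such as $\X$ compact and $w$ bounded and upper semicontinuous; by Carath\'eodory this makes $\wbar$ upper semicontinuous. The paper's proof has the same lacuna: strict convexity plus compactness does not by itself give a minimizer of $c-\wbar$.
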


By \Cref{prop:characterization}, we seek an action with $\subdiff c(a) \cap \supdiff \wbar(a) \neq \varnothing$. Consider the convex minimization problem
\[
\min_{a \in \A} \left[ c(a) - \wbar(a) \right].
\]
Since $c$ is strictly convex and $\wbar$ is concave, the objective is strictly convex. Compactness of $\A$ guarantees a minimizer $a^\sharp$. At this minimizer, the first-order condition $0 \in \subdiff[c(a^\sharp) - \wbar(a^\sharp)]$ implies $\subdiff c(a^\sharp) \cap \supdiff \wbar(a^\sharp) \neq \varnothing$, so $a^\sharp$ is self-inducing.
\paragraph{Step 4: Constructing the Dominating Distribution}
Given a self-inducing action, we can construct a specific compatible distribution under which the affine contract performs at least as well as the original contract.
\begin{lemma}[Distribution for Self-Inducing Actions]\label{lem:distribution}
Let $a^\sharp$ be a self-inducing action with corresponding affine contract $\psi^\sharp$ (i.e., $\psi^\sharp$ induces $a^\sharp$ and $\psi^\sharp$ is a supergradient of $\wbar$ at $a^\sharp$). There exists a compatible distribution $F$ such that:
\begin{enumerate}[label=(\roman*)]
    \item The contract $w$ induces action $a^\sharp$ under $F$.
    \item $\E_{x \sim F(a^\sharp)}[w(x)] = \E_{x \sim F(a^\sharp)}[\psi^\sharp(x)]$.
\end{enumerate}
\end{lemma}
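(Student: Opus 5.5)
We will build $F$ action by action so that, under $F$, the agent's expected payment from $w$ equals the concave envelope $\wbar(a)$ at $a^\sharp$ and is as small as possible elsewhere. For $a = a^\sharp$, we use the supporting hyperplane $\psi^\sharp$. Since $\psi^\sharp(x) \ge w(x)$ on $\X$ and $\psi^\sharp(a^\sharp) = \wbar(a^\sharp)$, any representation of $a^\sharp$ as a convex combination $\sum_i \lambda_i x_i$ attaining (or approaching) the supremum defining $\wbar(a^\sharp)$ must place all its mass on the contact set $\{x \in \X : \psi^\sharp(x) = w(x)\}$. We let $F(a^\sharp)$ be such a finitely supported distribution. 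It has mean $a^\sharp$, and $\E_{F(a^\sharp)}[w] = \sum_i \lambda_i \psi^\sharp(x_i) = \psi^\sharp(a^\sharp)$, which is (ii). This requires the supremum in $\wbar(a^\sharp)$ to be attained. We will argue this by Carathéodory (at most $d+1$ points) together with compactness of $\X$ and upper semicontinuity of $w$; if that fails, we fall back on an $\varepsilon$-approximate version of the argument and take a limit at the end.

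For $a \neq a^\sharp$ we need the agent not to prefer $a$. The simplest choice is any compatible $F(a)$; for instance, we can take the convex decomposition given by the concave envelope again, or an arbitrary decomposition of $a$ into points of $\X$. Whatever the choice, the expected payment satisfies $\E_{F(a)}[w] \le \wbar(a) \le \psi^\sharp(a)$, because $\psi^\sharp$ supports $\wbar$ from above everywhere. Hence
\[
V_A(a; w, F) \le \psi^\sharp(a) - c(a) \le \psi^\sharp(a^\sharp) - c(a^\sharp) = V_A(a^\sharp; w, F),
\]
where the middle inequality holds because $\psi^\sharp$ induces $a^\sharp$ (part (i) of \Cref{def:self-inducing}, i.e.\ $\psi^\sharp_{1:d} \in \subdiff c(a^\sharp)$ together with convexity of $c$). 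So $a^\sharp$ is a best response to $w$ under $F$. Since ties are broken in favor of the principal, to guarantee that $a^\sharp$ is the selected action we will instead choose $F$ at $a \ne a^\sharp$ to make $a$ weakly worse. Strict convexity of $c$ in fact makes $\psi^\sharp(a) - c(a) < \psi^\sharp(a^\sharp) - c(a^\sharp)$ for $a \ne a^\sharp$, so $a^\sharp$ is the unique maximizer, which gives (i).

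The main obstacle is the attainment issue at $a^\sharp$, together with the requirement that $a^\sharp$ be feasible as a mean of signals, i.e.\ $a^\sharp \in \conv(\X)$. For feasibility we use that $\wbar$ is $-\infty$ (or undefined) outside $\conv(\X)$, so the minimizer in \Cref{lem:existence} lies in the domain where compatible distributions exist; we will implicitly assume $\A \subseteq \conv(\X)$, as compatibility itself requires. For attainment we will first try compactness of $\X$ and upper semicontinuity of $w$. Otherwise we replace $F(a^\sharp)$ by a near-optimal decomposition, which yields $\E[w] \ge \psi^\sharp(a^\sharp) - \varepsilon$. This suffices downstream, because the payoff comparison $V_P(\psi^\sharp) \ge V_P(w) - \varepsilon$ can be passed to the limit, although the best-response claim then needs the strict-convexity margin to absorb $\varepsilon$.
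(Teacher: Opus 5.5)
Your proposal is essentially the paper's proof: place $F(a^\sharp)$ on the contact set $P=\{x : \psi^\sharp(x) = w(x)\}$, bound $\E_{F(a)}[w] \le \wbar(a) \le \psi^\sharp(a)$ for every other action, and use that $\psi^\sharp$ induces $a^\sharp$ uniquely because $c$ is strictly convex. Your attainment caveat is well founded. The paper simply asserts $a^\sharp \in \conv(P)$, which holds only if the supremum defining $\wbar(a^\sharp)$ is attained (e.g.\ under your compactness and upper-semicontinuity assumption). Without attainment, (ii) can fail outright: take $\X=[0,1]$, $w(x)=x^2$ on $[0,1)$, $w(1)=0$, $\psi^\sharp(x)=x$, and $c$ with $c'(a^\sharp)=1$ for some $a^\sharp\in(0,1)$. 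So your $\varepsilon$-fallback can at best serve the downstream theorem, not the lemma as stated.
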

The intuition is as follows. Let $\psi^\sharp$ be the affine contract that induces $a^\sharp$ and supports $\wbar(\cdot)$ at $a^\sharp$ from above. At the concave envelope,
\[
\wbar(a^\sharp) = \max \left\{ \sum_k \lambda_k w(x_k) : \lambda_k > 0, \sum_k \lambda_k = 1, \sum_k \lambda_k x_k = a^\sharp \right\}.
\]
Take $a^\sharp = \sum_k \lambda_k x_k$ as the optimal representation realizing $\wbar(a^\sharp)$. The key insight is that $\psi^\sharp(x_k) = w(x_k)$ on all the ``support'' points $x_k$---these are precisely the points where the supporting hyperplane touches the original contract.
We construct the compatible distribution $F(\cdot)$ by placing the distribution for action $a^\sharp$ onto these support points $x_k$. Under this distribution:
\begin{itemize}
    \item The agent receives identical payments under $\psi^\sharp$ and $w$ when taking action $a^\sharp$.
    \item For any other action, the payment under $w$ cannot exceed that under $\psi^\sharp$ (since $\psi^\sharp \geq w$ everywhere).
    \item Since $\psi^\sharp$ induces $a^\sharp$, the contract $w$ also induces $a^\sharp$ under $F(\cdot)$.
\end{itemize}
Therefore, under this distribution, the affine contract $\psi^\sharp$ performs as well as $w$. By \Cref{obs:invariance}, $\psi^\sharp$ achieves the same payoff under all compatible distributions, so $\psi^\sharp$ is no worse than $w$ even under the worst-case distribution.
\paragraph{Step 5: From Affine to Linear Contracts}
The final step shows that any affine contract can be improved to a linear contract with weakly lower payment while inducing the same action. For any action $a$, note that the linear contract $\nabla c(a)$ indeed induces $a$, by the agent's first-order condition when maximizing $\ip{\phi}{a} - c(a)$. One can further show that among all affine contracts, this linear contract induces $a$ at least expected payment.

\section{Beyond the Bilateral Setting}
\label{sec:multilateral}

We extend the concavification approach to two canonical multi-party settings: common agency (multiple principals, one agent) and team production (one principal, multiple agents). 

\subsection{Multiple Principals: Common Agency}
\label{sec:common-agency}

In the common agency setting, $n$ principals contract with a single agent. Each principal $i \in [n]$ offers a contract $w_{P_i}: \mathcal{X} \to \mathbb{R}$ and obtains gross benefit $u_{P_i}(a)$ when the agent exerts effort $a \in \mathcal{A}$, which generates stochastic outcome $x \in \mathcal{X}$. The agent incurs cost $c(a)$ and is protected by \emph{aggregate} limited liability: $\sum_{i=1}^n w_{P_i}(x) \geq 0$ for all $x$. This formulation, following \cite{marku2024robust}, allows cross-subsidization across principals—the agent may pay one principal using payments received from others.

Given contracts $(w_{P_1}, \ldots, w_{P_n})$, the agent maximizes aggregate expected payment minus cost. Principal $i$'s utility is $V_{P_i}(\bm{w}; F) = u_{P_i}(a^*) - \mathbb{E}_{x \sim F(a^*)}[w_{P_i}(x)]$, where $a^*$ is the agent's best response. As in \Cref{sec:robustness}, principals face distributional ambiguity, knowing only $\Ex[x \mid a] = a$. Each principal maximizes her worst-case payoff:
\[
V_{P_i}(\bm{w}) = \inf_{F \text{ compatible}} V_{P_i}(\bm{w}; F).
\]

\begin{restatable}{theorem}{thmCommonAgency}
\label{thm:robust-common-agency}
For any contract profile $(w_{P_1}, \ldots, w_{P_n})$, there exists an affine contract profile $({\psi}_{P_1}, \ldots, {\psi}_{P_n})$ such that $V_{P_i}(\bm{\psi}) \geq V_{P_i}(\bm{w})$ for all $i \in [n]$.
\end{restatable}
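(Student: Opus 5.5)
We apply the concavification argument of \Cref{sec:robustness} to the aggregate contract $W(x) = \sum_{i=1}^n w_{P_i}(x)$. By aggregate limited liability $W \geq 0$, so $W$ is an admissible single-principal contract. The agent's problem under $\bm{w}$ and any compatible $F$ is exactly the bilateral problem under $W$. Let $\overline{W}$ denote the concave envelope of $W$. By \Cref{lem:existence}, a minimizer $a^\sharp$ of $c(a) - \overline{W}(a)$ is self-inducing. Hence there is an affine $\Psi = (\Psi_0, \Psi_{1:d})$ with $\Psi_{1:d} \in \subdiff c(a^\sharp) \cap \supdiff \overline{W}(a^\sharp)$, $\Psi \geq W$ on $\X$, and $\Psi(a^\sharp) = \overline{W}(a^\sharp)$. \Cref{lem:distribution} then gives a compatible $F^\sharp$, placing $F^\sharp(a^\sharp)$ on the touching points $x_k$ of an optimal representation $a^\sharp = \sum_k \lambda_k x_k$. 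Under $F^\sharp$ the agent facing $\bm{w}$ chooses $a^\sharp$, with ties broken as in the bilateral model.

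The key step is to split $\Psi$ into individual affine contracts $\psi_{P_i}$ that reproduce each principal's expected payment under $F^\sharp$. I would define
\[
m_i := \E_{x \sim F^\sharp(a^\sharp)}[w_{P_i}(x)] = \sum_k \lambda_k w_{P_i}(x_k),
\]
so that $\sum_i m_i = \sum_k \lambda_k W(x_k) = \overline{W}(a^\sharp) = \Psi(a^\sharp)$. Fix any split of the slope, for instance $\psi_{P_1} = (\psi_{P_1,0}, \Psi_{1:d})$ and $\psi_{P_i} = (\psi_{P_i,0}, 0)$ for $i \geq 2$, or more symmetrically any slopes $\beta_i$ with $\sum_i \beta_i = \Psi_{1:d}$. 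Then choose the intercepts $\psi_{P_i,0} = m_i - \langle \beta_i, a^\sharp \rangle$. With this choice each $\psi_{P_i}$ has expected payment $m_i$ at $a^\sharp$ under every compatible distribution, by \Cref{obs:invariance}. Summing, the intercepts add to $\Psi_0$, so $\sum_i \psi_{P_i} = \Psi \geq W \geq 0$. This gives aggregate limited liability, and the aggregate affine contract is $\Psi$ itself, which induces $a^\sharp$ under every compatible $F$.

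For the comparison, fix principal $i$. Under $\bm{\psi}$ and any compatible $F$, the agent plays $a^\sharp$ and $i$ pays $m_i$, so
\[
V_{P_i}(\bm{\psi}) = u_{P_i}(a^\sharp) - m_i = V_{P_i}(\bm{w}; F^\sharp) \geq \inf_F V_{P_i}(\bm{w}; F) = V_{P_i}(\bm{w}).
\]
A single distribution $F^\sharp$ works simultaneously for all $i$, which is why every principal improves at once.

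The main obstacle is tie-breaking. With several principals, ``ties broken in favor of the principal'' is ambiguous, and the best response under $\bm{w}$ and $F^\sharp$ might not be $a^\sharp$ if another maximizer exists. Strict convexity of $c - \overline{W}$ makes $a^\sharp$ the unique maximizer of $\overline{W} - c$. I would therefore build $F^\sharp$ so that the agent's value at every other action $a$ is at most $\overline{W}(a) - c(a)$, which is strictly less than the value at $a^\sharp$. Since $\E[W(x)] \leq \overline{W}(a)$ for any compatible distribution, this holds automatically, and uniqueness resolves the issue without any tie-breaking convention. The same uniqueness, now from strict convexity of $c - \langle \Psi_{1:d}, \cdot \rangle$, makes $a^\sharp$ the unique response under $\bm{\psi}$.

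One further point needs checking: that the supremum defining $\overline{W}(a^\sharp)$ is attained by finitely many points, as \Cref{lem:distribution} assumes. If it is not attained, I would take the bilateral lemma as given rather than reprove it.
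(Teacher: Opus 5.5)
Your argument is correct and shares the paper's skeleton: concavify the aggregate $W$, take the self-inducing action $a^\sharp$ and the distribution $F^\sharp$ on the contact points, then split the supporting affine contract across principals. The splitting step, however, is genuinely different.

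The paper uses Carath\'eodory to place $F^\sharp(a^\sharp)$ on at most $d+1$ affinely independent points. It then interpolates each $w_{P_i}$ by an affine $\psi_{P_i}$ on those points, so each principal's payment is reproduced realization by realization. You instead match only expected payments: you fix any slopes with $\sum_i \beta_i = \Psi_{1:d}$ and solve for intercepts giving $\E[\psi_{P_i}] = m_i$ at $a^\sharp$. This suffices because the agent responds only to the aggregate, and each principal's payoff depends only on her expected payment at the induced action.

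Your route buys three things. It needs no Carath\'eodory or affine independence. It shows the slopes can be allocated freely among principals. And it makes $\sum_i \psi_{P_i} = \Psi$ hold by construction, which is exactly what delivers aggregate limited liability and the inducement of $a^\sharp$ under every compatible $F$.

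The paper's pointwise interpolation gives a stronger replication property, but it has a loose end. With fewer than $d+1$ support points the interpolants are not unique, and their sum can differ from the supporting contract off the affine hull of the support. One must then additionally impose something like $\psi_{P_n} := \Psi - \sum_{i<n} \psi_{P_i}$. Likewise, agreement on the support of $F^\sharp(a^\sharp)$ does not by itself equalize the agent's utility at other actions. Your uniqueness argument via strict convexity of $c - \overline{W}$ is the clean way to pin down the agent's choice, and it sidesteps the ambiguous multi-principal tie-breaking.

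Both proofs rely equally on the bilateral lemmas, including attainment of the supremum defining $\overline{W}(a^\sharp)$, which you rightly flag.
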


The proof applies concavification to the aggregate contract $w = \sum_{i=1}^n w_{P_i}$, finds a self-inducing action, then decomposes the supporting affine contract across principals using Carathéodory's theorem. However, the theorem is silent as to whether a robust equilibrium indeed exists when all principals use linear contracts. In fact, it is well-known that equilibria are hard to characterize due to non-convexity of principal's best responses (\cite{fraysse1993common, bernheim1986common, carmona2009existence}). \cite{marku2024robust} assumed that principal $i$'s utility is the identity at task $i$, and used a potential method to show equilibrium existence. In \Cref{sec:common-agency-characterization}, we characterize equilibria under homogeneous utility and cost functions, hence providing an alternative approach towards showing equilibrium existence.

\subsection{Multiple Agents: Team Production}
\label{sec:team-production}

In the team production setting, a single principal contracts with $n$ agents. Each agent $i$ exerts scalar effort $a_{A_i} \in [0, a_{\max}]$, and the team produces stochastic output $x \in \mathcal{X} \subseteq \mathbb{R}_+$ with $\Ex[x \mid \bm{a}] = f(\bm{a})$. The principal offers contracts $w_{A_i}: \mathcal{X} \to \mathbb{R}_+$ specifying payments as functions of realized output.

In this setting, we consider ambiguity-averse \emph{agents} who evaluate contracts under worst-case compatible distributions.

\begin{definition}[Ex-post Robust Equilibrium]
A strategy profile $\bm{\sigma}$ is an \emph{ex-post robust equilibrium} if it constitutes a Nash equilibrium for \emph{every} compatible distribution mapping $F$ satisfying $\Ex_{x \sim F(\bm{a})}[x] = f(\bm{a})$.
\end{definition}

We impose two assumptions: (i) each agent is essential ($a_{A_i} = 0$ implies $f(\bm{a}) = 0$), and (ii) there is distributional uncertainty except at zero effort ($f(a_{\max}, \ldots, a_{\max}) < x_{\max}$).

\begin{restatable}{theorem}{ThmTeamRobust}
\label{thm:team-robust}
Suppose the principal offers contracts $(w_{A_1}, \ldots, w_{A_n})$ that are not all affine. Under the above assumptions, the only ex-post robust equilibrium is $\bm{0}$.
\end{restatable}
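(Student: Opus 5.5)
Two things must be shown. First, $\bm 0$ is an ex-post robust equilibrium. Second, no profile $\bm a \neq \bm 0$ can be one once some contract $w_{A_j}$ is non-affine.

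For the first part, fix any compatible $F$ and suppose all other agents play $0$. By essentiality, $f(a_{A_i},\bm 0_{-i}) = 0$ for every $a_{A_i}$. Since $\mathcal{X}\subseteq\R_+$ and the mean is $0$, the output distribution is degenerate at $x=0$. So agent $i$ receives $w_{A_i}(0)$ regardless of his effort, while his cost $c_i(a_{A_i})$ is strictly increasing. Hence $0$ is a best response under every $F$, and $\bm 0$ is a Nash equilibrium for every compatible $F$.

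For the second part, suppose $\bm a\neq \bm 0$ is an ex-post robust equilibrium. If some coordinate is zero, then essentiality forces output $0$ almost surely. Any agent $k$ with $a_{A_k}>0$ then pays positive cost and gains nothing relative to playing $0$, a contradiction. So every coordinate is strictly positive, and $m:=f(\bm a)\in(0,x_{\max})$ by assumption (ii). Let $j$ be an agent with non-affine $w_{A_j}$. The plan is to build two compatible distribution mappings under which agent $j$'s best responses to $\bm a_{-j}$ are incompatible, so $a_{A_j}$ cannot be optimal under both.

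\begin{itemize}
\item[\textbf{Mapping $F^-$.}] On the path we must have mean $m$. Place $F^-(\bm a)$ on a distribution whose expected payment equals the lower (convex) envelope $\underline{w}_j(m)$ of $w_{A_j}$ at $m$. Off the path, for deviations $(a',\bm a_{-j})$, place the distribution achieving the concave envelope $\wbar_j(f(a',\bm a_{-j}))$.
\item[\textbf{Mapping $F^+$.}] Do the reverse: use $\wbar_j(m)$ on the path and the lower envelope at deviations.
\end{itemize}

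A single mapping can be chosen separately at each action profile, so each of these is a legitimate compatible $F$.

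Non-affinity of $w_{A_j}$ should give $\wbar_j(y)>\underline{w}_j(y)$ at interior points $y$, possibly after arguing over which points lie in $(0,x_{\max})$. Optimality of $a_{A_j}$ under $F^-$ then requires
\[
\underline{w}_j(m)-c_j(a_{A_j})\ \ge\ \wbar_j\bigl(f(a',\bm a_{-j})\bigr)-c_j(a')\quad\text{for all } a'.
\]
Taking $a'\to a_{A_j}$ and using continuity of $f$ and $c_j$ forces $\underline{w}_j(m)\ge \wbar_j(m)$ in the limit. This needs the envelopes to be at least semicontinuous, and I would handle that by perturbing $a'$ only slightly. It contradicts the strict gap at $m$, provided that gap holds at $m$ and not merely somewhere. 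Note that $F^-$ alone already yields the contradiction; $F^+$ is a backup if continuity issues arise.

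The main obstacle is ensuring the strict gap $\wbar_j(y)>\underline{w}_j(y)$ holds at (or near) the on-path mean $m$. A non-affine function could be affine on part of $\mathcal{X}$ and non-affine only elsewhere. However, the envelopes are taken over all of $\mathcal{X}$, including the endpoints $0$ and $x_{\max}$. For non-affine $w$, the upper and lower envelopes over the full segment differ at every interior point: if they agreed at an interior $y$, the chord structure would force $w$ to be affine on the whole segment. I would prove this one-dimensional convex-analysis fact first. The assumptions $m>0$ and $m<x_{\max}$ then place $m$ in the interior, closing the argument.
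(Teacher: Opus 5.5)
\textbf{Gap: mixed profiles are never excluded.} The definition of ex-post robust equilibrium is stated for strategy profiles $\bm\sigma$, and the theorem says \emph{every} such equilibrium is $\bm 0$. Your second part only rules out pure profiles $\bm a\neq\bm 0$. The argument does not carry over verbatim to mixing. If $\sigma_{A_j}$ has full support, agent $j$ has no ``off-path'' own action on which to place the upper envelope. Moreover, optimality is only required for $\sigma_{A_j}$-almost every action.

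The paper closes this with a separate Step 1. Under the mapping realizing $\wbar_{A_i}$, it asserts that $\E_{\bm a_{-i}\sim\bm\sigma_{-i}}[\wbar_{A_i}(f(a_{A_i},\bm a_{-i}))]-c_{A_i}(a_{A_i})$ is strictly concave in $a_{A_i}$, so best responses are unique and equilibria are pure. That assertion silently needs $\wbar_{A_i}\circ f$ to be concave in own effort.

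You can instead extend your own two-envelope idea.
\begin{enumerate}[label=(\roman*)]
\item If $f(\bm a)=0$ $\bm\sigma$-a.s., any agent with mass on positive effort gains by switching to $0$: the payment stays $w_{A_i}(0)$ and the cost is lower.
\item Otherwise $f(\bm a)$ is interior with positive probability. Pick a countable dense set $D$ of agent $j$'s actions that avoids the countably many atoms of $\sigma_{A_j}$, so $\sigma_{A_j}(D)=0$. Let $F$ approximate $\underline{w}_{A_j}$ at profiles with $a_{A_j}\notin D$, and approximate $\wbar_{A_j}$ at profiles with $a_{A_j}\in D$.
\item Agent $j$'s equilibrium payoff is then computed with $\underline{w}_{A_j}$. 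Set $G(a'):=\E_{\bm a_{-j}\sim\bm\sigma_{-j}}[\wbar_{A_j}(f(a',\bm a_{-j}))]-c_{A_j}(a')$. By the strict gap on a positive-probability event, $\int G\,d\sigma_{A_j}$ strictly exceeds that payoff.
\item A concave function on a closed interval is lower semicontinuous. With Fatou and continuity of $f$, $G$ is lower semicontinuous, so some $a'\in D$ is a profitable deviation.
\end{enumerate}

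For pure profiles your argument is correct and takes a different route from the paper's Step 2. The paper makes the realized payment function $\hat w_{A_i}$ flat near $f(\bm a)$, so a small \emph{downward} deviation keeps the payment and saves cost. You put the lower envelope on path and the upper envelope off path, so a nearby deviation gains the envelope gap at negligible change in cost.

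Both routes rest on the strict gap at $f(\bm a)$ and on continuity of $f$ and of the envelopes at interior points. The paper only asserts the gap, whereas your lemma supplies it. Indeed, $\wbar_{A_j}-\underline{w}_{A_j}$ is a nonnegative concave function on $\operatorname{conv}(\mathcal X)$, so vanishing at an interior point forces it to vanish identically, which makes $w_{A_j}$ affine.

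You also cover two things the paper skips:
\begin{enumerate}[label=(\roman*)]
\item that $\bm 0$ is an equilibrium at all;
\item profiles with a zero coordinate. There $f(\bm a)=0$ is an endpoint with no gap, and the paper's deviation to $a_{A_i}-\eps$ may not exist.
\end{enumerate}

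Small repairs are needed:
\begin{enumerate}[label=(\roman*)]
\item Essentiality is one-directional, so all-positive effort need not give $m>0$. If $m=0$, your deviate-to-zero argument applies again.
\item $m<x_{\max}$ needs $f$ monotone, or assumption (ii) read as $f<x_{\max}$ everywhere.
\item The envelopes need not be attained. Use $\delta$-optimal distributions, which the strict gap absorbs.
\end{enumerate}
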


The intuition is that non-affine contracts create a gap between the concave and convex envelopes of payments. Ambiguity-averse agents, when faced with such uncertainty, are driven toward zero effort—the only point where this uncertainty vanishes.

\section{Characterization of Optimal/Equilibrium Solutions with Homogeneous Functions}
\label{sec:homogeneous}

In general, optimal linear contracts are difficult to characterize in closed form. However, when the principal's utility and the agent's cost functions exhibit homogeneity, we obtain tractable characterizations across all three settings. We also derive computation and approximation guarantee results based on the characterizations. Proofs for this section are deferred to \Cref{proof:homogeneous-results}. 


\subsection{Optimal Contracts in Bilateral Setting}

We first consider the bilateral multitasking setting as in \Cref{sec:problem-statement}, and show that under homogeneity, the principal's optimization problem reduces to a tractable concave program. The key observation leverages Euler's theorem for homogeneous functions.

\begin{observation}\label{obs:euler-payment}
If $c$ is homogeneous of degree $k_c$, then for any action $a$ induced by a linear contract $\phi = \nabla c(a)$, the expected payment to the agent satisfies
\[
\ip{\phi}{a} = \ip{\nabla c(a)}{a} = k_c \cdot c(a).
\]
\end{observation}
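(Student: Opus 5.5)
The statement is Observation~\ref{obs:euler-payment}. I will prove it as a direct application of Euler's theorem for homogeneous functions, in two steps.

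\emph{First step.} Recall that if $c$ is differentiable and homogeneous of degree $k_c$, meaning $c(t a) = t^{k_c} c(a)$ for all $t>0$ and all $a$ with $ta \in \A$, then $\ip{\nabla c(a)}{a} = k_c\, c(a)$. I would derive this directly. Fix $a$ and set $g(t) = c(ta)$ for $t$ in a neighborhood of $1$; since $\A$ is downward closed, $ta\in\A$ for all $t\le 1$, so at minimum a one-sided neighborhood is available. By the chain rule, $g'(t) = \ip{\nabla c(ta)}{a}$. By homogeneity, $g(t) = t^{k_c} c(a)$, so $g'(t) = k_c t^{k_c-1} c(a)$. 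Setting $t=1$ gives the identity. If $a$ lies on the boundary of $\A$, I will use the left derivative at $t=1$, which suffices because $c$ is differentiable on its domain.

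\emph{Second step.} I connect this identity to the contract. Under a linear contract $\phi$, the agent maximizes $\ip{\phi}{a} - c(a)$. As noted in Step~5 of \Cref{sec:robustness}, the contract $\phi = \nabla c(a)$ induces $a$ by the first-order condition, and strict convexity of $c$ makes this optimum unique. By compatibility, or equivalently by \Cref{obs:invariance} with $\psi_0 = 0$, the expected payment under any compatible $F$ is $\E[\ip{\phi}{x}] = \ip{\phi}{a}$. Substituting $\phi = \nabla c(a)$ and applying the Euler identity from the first step yields $\ip{\phi}{a} = \ip{\nabla c(a)}{a} = k_c\, c(a)$.

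\emph{Expected obstacle.} The only delicate point is the boundary of $\A$. There, the first-order condition and the derivative in $t$ must be read one-sidedly, and the inducing property holds under the convention that ties are broken in favor of the principal. The payment identity itself is purely algebraic and does not depend on these issues.
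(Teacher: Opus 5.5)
Your proof is correct and matches the paper, which simply says the identity follows from Euler's theorem; you additionally derive Euler's identity by differentiating $t \mapsto c(ta)$ at $t=1$ (one-sidedly, using downward-closedness), and you make explicit that compatibility makes the expected payment equal to $\langle \phi, a\rangle$. Your worry about the boundary and tie-breaking is unnecessary: for convex $c$ the gradient inequality $c(a') \ge c(a) + \langle \nabla c(a), a' - a\rangle$ already makes $a$ the unique best response to $\phi = \nabla c(a)$, and as you note, the payment identity itself is purely algebraic.
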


This follows immediately from Euler's theorem. Consequently, the principal's utility when inducing action $a$ via a linear contract is
\[
u(a) - \ip{\phi}{a} = u(a) - k_c \cdot c(a).
\]

\begin{theorem}[Optimal Linear Contract in Bilateral Setting]\label{thm:bilateral-optimal}
Suppose $u$ is concave and $c$ is homogeneous of degree $k_c > 1$. The optimal action $a^*$ to induce via a linear contract solves
\[
a^* = \arg\max_{a} \left\{ u(a) - k_c \cdot c(a) \right\}.
\]
The optimal linear contract is $\phi^* = \nabla c(a^*)$, and this optimization is a concave program.
\end{theorem}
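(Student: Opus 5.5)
The plan is to reduce the choice of a linear contract to a choice of the action it induces, and then use Observation~\ref{obs:euler-payment} to write the principal's payoff purely in terms of that action.

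\textbf{Step 1: the map from linear contracts to induced actions.} For a linear contract $\phi \in (\R_+)^d$, the agent maximizes $\ip{\phi}{a} - c(a)$. By Observation~\ref{obs:invariance} this objective is the same under every compatible $F$, so the worst case in \eqref{eq:worst-case} plays no role. Since $c$ is strictly convex, the objective is strictly concave and has a unique maximizer $a(\phi)$. At an interior maximizer, the first-order condition gives $\phi = \nabla c(a(\phi))$. Conversely, Step~5 of Section~\ref{sec:robustness} shows that for every action $a$ the contract $\phi = \nabla c(a)$ induces $a$. It has nonnegative entries because $c$ is increasing. Boundary points of the downward-closed set $\A$ are handled the same way using subgradients. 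On the boundary where some $a_i = 0$, choosing the smallest admissible $\phi_i$ (namely $\partial_i c(a)$, truncated at zero) only lowers the payment. Hence the feasible set of the principal's problem can be reparametrized by $a \in \A$, taking the cheapest inducing contract $\nabla c(a)$ at each $a$.

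\textbf{Step 2: the principal's payoff in terms of $a$.} Under $\phi = \nabla c(a)$, the principal receives $u(a) - \ip{\nabla c(a)}{a}$. By Observation~\ref{obs:euler-payment} (Euler's theorem for a function homogeneous of degree $k_c$), this equals $u(a) - k_c\, c(a)$. Maximizing over linear contracts is therefore equivalent to maximizing $u(a) - k_c c(a)$ over $a \in \A$. The optimal contract is then $\phi^* = \nabla c(a^*)$.

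\textbf{Step 3: concavity and existence.} The function $u$ is concave and $k_c c$ is strictly convex with $k_c > 1 > 0$, so the objective is strictly concave. The set $\A$ is compact, and convex under the paper's standing setup. A maximizer therefore exists and is unique, which justifies writing $\arg\max$ as a single point. The program is a concave maximization over a convex set.

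\textbf{Main obstacle.} The delicate step is the exact identification in Step~1. We must check that inducing $a$ by any other linear contract cannot be cheaper, and that boundary actions, where $\nabla c(a)$ may not be the unique inducing contract, do not break the Euler identity. I would first try to argue that interior optimality forces $\phi = \nabla c(a)$ uniquely. At the boundary I would use that $\ip{\phi}{a}$ depends only on the coordinates with $a_i > 0$, where the first-order condition still pins down $\phi_i = \partial_i c(a)$. Euler's identity $\sum_i a_i \partial_i c(a) = k_c c(a)$ then holds regardless of the values of $\phi_i$ on the zero coordinates.
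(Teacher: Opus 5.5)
Your route is the same as the paper's. You reparametrize linear contracts by the action they induce, use Euler's identity $\ip{\nabla c(a)}{a} = k_c\, c(a)$ (\Cref{obs:euler-payment}) to write the principal's payoff as $u(a) - k_c c(a)$, and observe that this objective is concave. The paper simply asserts the first-order condition $\phi = \nabla c(a)$ without discussing the boundary. Your extra care there is welcome, but one claim in your ``main obstacle'' paragraph is false.

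\textbf{The false claim.} On the \emph{upper} boundary of the compact set $\A$, the first-order condition does not pin down $\phi_i = \partial_i c(a)$ on coordinates with $a_i > 0$. For example, take $d = 1$, $\A = [0,1]$ and $c(a) = a^2$: every $\phi \ge 2$ induces $a = 1$.

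\textbf{The fix.} What you actually need is only the inequality $\ip{\phi}{a} \ge \ip{\nabla c(a)}{a}$ for every linear $\phi$ that induces $a$. It follows from the variational inequality $\ip{\phi - \nabla c(a)}{a' - a} \le 0$ tested at $a' = \rho a$ with $\rho \in (0,1)$. Downward-closedness permits this choice, and it gives $(\rho - 1)\ip{\phi - \nabla c(a)}{a} \le 0$, hence the bound. This is exactly the argument of \Cref{lem:affine-to-linear}. It treats interior, lower-boundary and upper-boundary points uniformly, so your coordinatewise case split is unnecessary.

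\textbf{A smaller slip.} At a coordinate with $a_i = 0$, the admissible $\phi_i$ are those \emph{at most} $\partial_i c(a)$, so $\partial_i c(a)$ is the largest admissible value, not the smallest. This is harmless, since $\phi_i a_i = 0$ there.

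\textbf{The converse direction.} That $\nabla c(a)$ induces every $a \in \A$, including boundary points, follows from the gradient inequality $c(a') \ge c(a) + \ip{\nabla c(a)}{a' - a}$ for convex $c$. With both directions in place, your reduction to $\max_{a} \{u(a) - k_c c(a)\}$ is complete.
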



\paragraph{Approximation guarantees.} We now quantify the efficiency loss from using linear contracts relative to first-best. Define:
\begin{align*}
V^*_{\LIN} &= \max_\phi \left\{ u(a(\phi)) - \ip{\phi}{a(\phi)} \right\} = \max_a \left\{ u(a) - k_c \cdot c(a) \right\}, \\
V^*_{\FB} &= \max_a \left\{ u(a) - c(a) \right\}.
\end{align*}

\begin{restatable}{theorem}{ThmBilateralHomo}
\label{thm:approximation-bilateral}
Suppose the principal's utility $u(a)$ is homogeneous of degree $k_u \in (0,1)$ and concave, and the agent's cost $c(a)$ is homogeneous of degree $k_c > 1$ and convex. Then
\[
\frac{V^*_{\LIN}}{V^*_{\FB}} = k_c^{-k_u/(k_c-k_u)}.
\]
\end{restatable}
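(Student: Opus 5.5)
Both values can be computed in closed form by restricting each optimization to rays, using the two homogeneity degrees. Fix a direction $v$ with $\|v\|=1$ and consider actions $a = t v$ for $t \ge 0$. By homogeneity, $u(tv) = t^{k_u} u(v)$ and $c(tv) = t^{k_c} c(v)$. Write $U = u(v)$ and $C = c(v)$. For a generic multiplier $\kappa \in \{1, k_c\}$, the one-dimensional problem is
\[
g_\kappa(v) = \max_{t\ge 0}\; t^{k_u} U - \kappa t^{k_c} C .
\]
This covers $V^*_{\FB}$ (with $\kappa=1$) and, by \Cref{thm:bilateral-optimal}, $V^*_{\LIN}$ (with $\kappa=k_c$). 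Because $0<k_u<1<k_c$, the objective is positive for small $t$ and tends to $-\infty$ as $t \to \infty$. The first-order condition $k_u t^{k_u-1}U = \kappa k_c t^{k_c-1}C$ therefore gives the unique interior maximizer
\[
t_\kappa = \left(\frac{k_u U}{\kappa k_c C}\right)^{1/(k_c-k_u)} .
\]
Substituting back yields
\[
g_\kappa(v) = t_\kappa^{k_u} U\left(1-\frac{k_u}{k_c}\right).
\]

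The key step is the scaling identity that follows. Since $t_\kappa = \kappa^{-1/(k_c-k_u)}\, t_1$, we get
\[
g_\kappa(v) = \kappa^{-k_u/(k_c-k_u)}\, g_1(v)
\]
for every direction $v$. The factor $\kappa^{-k_u/(k_c-k_u)}$ does not depend on $v$. Taking the maximum over directions (with the direction $a=0$ giving value $0$ in both problems) therefore gives
\[
V^*_{\LIN} = \max_v g_{k_c}(v) = k_c^{-k_u/(k_c-k_u)} \max_v g_1(v) = k_c^{-k_u/(k_c-k_u)} V^*_{\FB}.
\]
This is the claimed ratio.

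The main obstacle is feasibility, i.e.\ compactness of $\A$. The ray maximizers $t_\kappa v$ must lie in $\A$, otherwise the constrained maxima would not scale cleanly. Note that $t_{k_c} < t_1$, so the linear-contract optimum is always a shrinking of the first-best point. Downward-closedness of $\A$ keeps the linear optimum feasible whenever the first-best one is. The converse direction would break the identity when the first-best ray maximizer is cut off by the boundary. I would handle this by assuming, as the statement implicitly does, that the unconstrained maximizers are interior; for example, take $\A$ large enough or a cone truncated beyond the optimum. I would note this explicitly rather than try to prove the identity in the boundary case, where only the inequality $V^*_{\LIN} \ge k_c^{-k_u/(k_c-k_u)} V^*_{\FB}$ survives.

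A secondary check is that $U>0$ in some direction, so that $V^*_{\FB}>0$ and the ratio is well defined. This holds because $u$ is nonnegative and not identically zero. Directions with $U=0$ contribute $0$ to both problems and do not affect the argument.
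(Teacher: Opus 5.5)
Your proof is correct, and it takes a genuinely different route from the paper's. The paper works with the $d$-dimensional first-order conditions. From $\nabla u(a^*_{\FB})=\nabla c(a^*_{\FB})$ and the homogeneity of the gradients (degrees $k_u-1$ and $k_c-1$), it posits that the optimal linear action is $\lambda a^*_{\FB}$. It then solves $\lambda^{k_u-1}=k_c\lambda^{k_c-1}$ to get $\lambda=k_c^{-1/(k_c-k_u)}$, and evaluates both values using Euler's identity $k_u u(a^*_{\FB})=k_c c(a^*_{\FB})$ plus some algebra. You instead solve both problems in closed form on every ray and obtain the direction-independent identity $g_{k_c}(v)=k_c^{-k_u/(k_c-k_u)}g_1(v)$. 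The ratio of suprema then follows without ever locating the optimal direction.

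What your route buys:
\begin{itemize}
\item You need only homogeneity with $0<k_u<k_c$, together with $c(v)>0$ for $v\neq 0$. You use the latter implicitly; it follows from strict monotonicity and $c(0)=0$.
\item You use neither differentiability nor concavity of $u$, and you do not need the suprema to be attained.
\item You avoid a step the paper leaves implicit: that $\lambda a^*_{\FB}$ is the global maximizer of $u-k_c c$, which in the paper rests on concavity of $u-k_c c$ and interiority.
\item Your formula $g_\kappa(v)=(1-k_u/k_c)\,u(t_\kappa v)$ shortens the paper's final algebra, since the ratio is just $u(t_{k_c}v)/u(t_1 v)=\lambda^{k_u}$ along the common optimal ray.
\end{itemize}
The paper's route, in turn, produces the optimal linear action directly from a gradient equation. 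That is the template it reuses for the common-agency ratio, with $2k_c-1$ in place of $k_c$.

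Your feasibility caveat is well taken. The paper's proof silently assumes interior optima by writing unconstrained first-order conditions. Your condition (the first-best maximizer lies in $\A$, and downward-closedness then keeps the shrunken linear optimum feasible) is exactly what is needed.

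The boundary inequality you assert without proof is also true. On a ray truncated at $T$:
\begin{itemize}
\item If $T\ge t_{k_c}$, the linear ray value is unconstrained and equals $k_c^{-k_u/(k_c-k_u)}g_1(v)$, which is at least $k_c^{-k_u/(k_c-k_u)}$ times the constrained first-best value.
\item If $T<t_{k_c}$, write $h_\kappa(t)=t^{k_u}U-\kappa t^{k_c}C$. The difference $h_{k_c}-k_c^{-k_u/(k_c-k_u)}h_1$ has the form $\alpha t^{k_u}-\beta t^{k_c}$ with $\alpha\ge 0$ and $\beta>0$, so it is nonnegative exactly on an initial interval. It is nonnegative at $t_{k_c}$, hence on all of $[0,t_{k_c}]$, which contains $T$.
\end{itemize}
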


The proof analyzes the first-order conditions in both settings. At the first-best optimum $a^*_{\FB}$:
\[
\nabla u(a^*_{\FB}) = \nabla c(a^*_{\FB}).
\]
At the second-best optimum $a^*_{\LIN}$ under linear contracts:
\[
\nabla u(a^*_{\LIN}) = k_c \cdot \nabla c(a^*_{\LIN}).
\]
By homogeneity of the gradients, the second-best action is a scaled version of the first-best action: $a^*_{\LIN} = \lambda \cdot a^*_{\FB}$ for some $\lambda \in (0,1)$. Solving for $\lambda$ and applying Euler's theorem yields the exact approximation ratio. 


\subsection{Equilibrium in Common Agency}
\label{sec:common-agency-characterization}
We next study the common agency setting as in \Cref{sec:common-agency}. We characterize the equilibria in this setting under homogeneity.

\begin{assumption}\label{assumption:common-agency}
The cost function $c$ is homogeneous of degree $k_c > 1$. The utility functions $u_{P_1}, \ldots, u_{P_n}$ are concave and continuously differentiable.
\end{assumption}

\begin{theorem}[Equilibrium in Linear Contracts]\label{thm:common-agency-equilibrium}
Under \Cref{assumption:common-agency}, there exists a linear contract equilibrium $(\phi_{P_1}^*, \ldots, \phi_{P_n}^*)$ that is a robust Nash equilibrium. The induced action ${a}^*$ satisfies
\begin{equation}\label{eq:equilibrium-action}
(n(k_c - 1) + 1) \nabla c({a}^*) = \sum_{i=1}^n \nabla u_{P_i}({a}^*),
\end{equation}
and the equilibrium contracts are
\begin{equation}\label{eq:phi-star}
{\phi}_{P_i}^* = (1 - k_c) \nabla c({a}^*) + \nabla u_{P_i}({a}^*) \quad \text{for all } i \in [n].
\end{equation}
\end{theorem}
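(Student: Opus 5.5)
The plan is to restrict attention to linear contract profiles and analyze the resulting game among principals, using \Cref{thm:robust-common-agency} to justify this restriction. Under a linear profile $(\phi_{P_1},\dots,\phi_{P_n})$, \Cref{obs:invariance} says each principal's payoff is the same under every compatible $F$. The agent simply faces the aggregate linear contract $\Phi=\sum_j \phi_{P_j}$. Because $c$ is strictly convex, the agent's unique best response $a(\Phi)$ is characterized by $\nabla c(a)=\Phi$, at least for interior actions.

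\textbf{Step 1: principal $i$'s best response as a problem over actions.} Fix the other principals' contracts and write $\Phi_{-i}=\sum_{j\neq i}\phi_{P_j}$. For principal $i$, choosing $\phi_{P_i}$ is equivalent to choosing the induced action $a$, with $\phi_{P_i}=\nabla c(a)-\Phi_{-i}$. Her payoff is then
\[
  u_{P_i}(a)-\ip{\nabla c(a)-\Phi_{-i}}{a}=u_{P_i}(a)-k_c c(a)+\ip{\Phi_{-i}}{a},
\]
where the equality uses Euler's identity from \Cref{obs:euler-payment}. Since $u_{P_i}$ is concave and $c$ is convex, this objective is concave in $a$. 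Its first-order condition is $\nabla u_{P_i}(a)-k_c\nabla c(a)+\Phi_{-i}=0$. Substituting $\Phi_{-i}=\nabla c(a)-\phi_{P_i}$ gives $\phi_{P_i}=(1-k_c)\nabla c(a)+\nabla u_{P_i}(a)$, which is exactly \eqref{eq:phi-star}. Summing over $i$ and imposing $\sum_i\phi_{P_i}=\nabla c(a)$ gives $\nabla c(a)=n(1-k_c)\nabla c(a)+\sum_i\nabla u_{P_i}(a)$, which rearranges to \eqref{eq:equilibrium-action}.

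\textbf{Step 2: existence of $a^*$.} Define $G(a)=(n(k_c-1)+1)\,c(a)-\sum_i u_{P_i}(a)$. This function is strictly convex and continuous on the compact set $\A$, so it has a unique minimizer $a^*$. At an interior minimizer, the first-order condition is precisely \eqref{eq:equilibrium-action}. I would then define $\phi^*_{P_i}$ by \eqref{eq:phi-star}. By construction their sum is $\nabla c(a^*)$, so the agent best-responds with $a^*$.

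\textbf{Step 3: verification and robustness.} Concavity of each principal's objective in Step 1 means the first-order condition suffices for principal $i$'s optimality among linear deviations. To handle arbitrary (nonlinear) deviations, I would rerun the argument of \Cref{thm:robust-common-agency}/\Cref{thm:robustness} with the others' contracts held fixed at linear $\phi^*_{-i}$. Concavify $w_{P_i}+\Phi_{-i}$ and take a self-inducing action. Principal $i$'s deviation is then weakly dominated in the worst case by an affine contract, and an affine contract is in turn dominated by the linear one inducing the same action, as in Step 5 of \Cref{sec:robustness}. Hence no deviation improves her worst-case payoff, and the profile is a robust Nash equilibrium.

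\textbf{Main obstacle.} The delicate points are boundary and sign issues. First, $a^*$ may lie on the boundary of $\A$, or $\phi^*_{P_i}$ may have negative coordinates. Common agency permits this, since only aggregate limited liability is required, but it must be checked that $\nabla c(a^*)\ge 0$. Second, the reduction from nonlinear deviations to linear ones must respect the fixed contracts of the other principals. I would handle the boundary case by replacing gradient equalities with KKT conditions on the downward-closed set $\A$, or by assuming $a^*$ is interior.
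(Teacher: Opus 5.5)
Your proposal is correct and follows essentially the same route as the paper. The shared steps are: reparametrize principal $i$'s choice by the induced action, apply Euler's identity $\langle \nabla c(a), a\rangle = k_c c(a)$, combine each principal's first-order condition with the agent's condition $\sum_i \phi_{P_i} = \nabla c(a)$, and obtain $a^*$ as the minimizer of $(n(k_c-1)+1)c(a) - \sum_i u_{P_i}(a)$. Several of your additions go beyond the paper's two-principal sketch: the direct substitution for general $n$, the concavity argument for sufficiency, and the handling of nonlinear deviations by concavification with the others' linear contracts absorbed into the aggregate. The paper simply asserts the equilibrium property and, like you, implicitly assumes an interior $a^*$.
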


The equilibrium exhibits a natural structure: each principal's contract reflects her own marginal benefit, adjusted by a common term proportional to marginal cost. As the number of principals grows, the coefficient $n(k_c - 1) + 1$ increases, reflecting greater free-riding and reduced equilibrium effort.

\begin{proposition}[Tractable Computation]\label{prop:computation}
The equilibrium action ${a}^*$ from \Cref{eq:equilibrium-action} solves the concave maximization problem
\[
{a}^* = \arg\max_{{a}} \left\{ \sum_{i=1}^n u_{P_i}({a}) - (n(k_c - 1) + 1) c({a}) \right\}.
\]
Then given ${a}^*$, the equilibrium contracts follow from \eqref{eq:phi-star}.
\end{proposition}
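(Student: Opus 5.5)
The plan is to show that the first-order condition \eqref{eq:equilibrium-action} is exactly the stationarity condition of the stated objective, and that this objective is strictly concave, so its maximizer exists and is that point. Write
\[
G(a) = \sum_{i=1}^n u_{P_i}(a) - (n(k_c - 1) + 1)\, c(a).
\]
By \Cref{assumption:common-agency}, each $u_{P_i}$ is concave and continuously differentiable. Since $k_c > 1$, the coefficient $n(k_c-1)+1$ is positive (indeed $> 1$), and $c$ is strictly convex. Hence $G$ is a sum of concave functions minus a positive multiple of a strictly convex function, so it is strictly concave and differentiable.

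Compactness of $\A$ together with continuity of $G$ gives a maximizer. Strict concavity makes that maximizer unique, so the $\arg\max$ is well defined.

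Next I match the first-order conditions. At an interior maximizer $a^*$, stationarity gives $\nabla G(a^*) = 0$, which reads
\[
\sum_{i=1}^n \nabla u_{P_i}(a^*) = (n(k_c-1)+1)\,\nabla c(a^*).
\]
This is exactly \eqref{eq:equilibrium-action}. Conversely, for a concave differentiable function, any point satisfying \eqref{eq:equilibrium-action} is a global maximizer of $G$. So the two characterizations coincide, and the solution of \eqref{eq:equilibrium-action} is unique.

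The main obstacle is boundary behaviour, since $\A$ is compact and downward closed. A maximizer could sit on the boundary, where only the KKT/normal-cone condition $0 \in \nabla G(a^*) - N_\A(a^*)$ holds rather than the equality. I would handle this in one of two ways:
\begin{itemize}
\item \emph{Interiority.} Argue, as the paper implicitly does in \Cref{thm:common-agency-equilibrium}, that \eqref{eq:equilibrium-action} is understood at interior solutions. Then assume, or derive from growth of $c$ via degree-$k_c$ homogeneity, that the maximizer is interior.
\item \emph{Matching interpretation.} Read \eqref{eq:equilibrium-action} as the variational inequality version. Then the same equivalence holds verbatim, because concave maximization over a convex set is characterized exactly by that variational inequality.
\end{itemize}
The convexity of $\A$ needed here I would take as part of the setting (downward closed compact, implicitly convex, as already used in \Cref{lem:existence}).

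Finally, given $a^*$, I define $\phi^*_{P_i}$ by \eqref{eq:phi-star} and check consistency. Summing over $i$ gives
\[
\sum_{i=1}^n \phi^*_{P_i} = n(1-k_c)\nabla c(a^*) + \sum_{i=1}^n \nabla u_{P_i}(a^*) = \nabla c(a^*),
\]
using \eqref{eq:equilibrium-action} in the last step. The aggregate linear contract therefore induces $a^*$ by the agent's first-order condition. That these contracts form the equilibrium is taken from \Cref{thm:common-agency-equilibrium}, so the proposition follows.
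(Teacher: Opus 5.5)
Your proposal is correct and follows the paper's route: the paper likewise notes that the objective is concave (concave utilities minus a positive multiple of the convex cost) and that its stationarity condition is exactly \eqref{eq:equilibrium-action}, which is how it identifies $a^*$ in the proof of \Cref{thm:common-agency-equilibrium}. You add care the paper omits: strict concavity for uniqueness, the boundary caveat, and the check that the contracts in \eqref{eq:phi-star} sum to $\nabla c(a^*)$. Note that the boundary issue only affects the direction ``maximizer $\Rightarrow$ \eqref{eq:equilibrium-action}''; any point satisfying \eqref{eq:equilibrium-action} is a global maximizer by concavity wherever it lies.
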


Since $\sum_{i=1}^n u_{P_i}$ is concave and $c$ is convex, the objective is concave, making the equilibrium efficiently computable. Further, based on the equilibrium characterization, we give approximation guarantees for linear contracts, which appear in \Cref{sec:common-agency-app}.

\subsection{Optimal Linear Contracts in Team Production}

Finally, we study the team production setting from \Cref{sec:team-production}. We first show the game admits a potential function. 

\begin{proposition}[Potential Game Structure]\label{prop:potential-game}
Given a linear contract profile $\bm{\phi} = (\phi_{A_1}, \ldots, \phi_{A_n})$ with $\phi_{A_i} > 0$, the induced game among agents is a weighted potential game with potential function
\[
\Gamma(\bm{a}; \bm{\phi}) = f(\bm{a}) - \sum_{i=1}^n \frac{c_{A_i}(a_{A_i})}{\phi_{A_i}}.
\]
\end{proposition}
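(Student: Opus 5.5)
The statement to prove is \Cref{prop:potential-game}. Under a linear contract profile, agent $i$ receives $\phi_{A_i} x$. By compatibility, $\Ex[x \mid \bm{a}] = f(\bm{a})$ for every compatible $F$. So agent $i$'s expected utility is
\[
U_i(\bm{a}) = \phi_{A_i} f(\bm{a}) - c_{A_i}(a_{A_i}),
\]
whatever the distribution mapping. This is the team-production analogue of \Cref{obs:invariance}. The plan is to verify directly the defining property of a weighted potential game with weights $w_i = \phi_{A_i} > 0$. The property is that for every agent $i$, every profile $\bm{a}_{-i}$ of the others, and every pair of actions $a_{A_i}, a'_{A_i} \in [0,a_{\max}]$,
\[
U_i(a_{A_i}, \bm{a}_{-i}) - U_i(a'_{A_i}, \bm{a}_{-i}) = \phi_{A_i}\bigl[\Gamma(a_{A_i}, \bm{a}_{-i};\bm{\phi}) - \Gamma(a'_{A_i}, \bm{a}_{-i};\bm{\phi})\bigr].
\]

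To check this, divide $U_i$ by $\phi_{A_i}$, which is allowed because $\phi_{A_i}>0$:
\[
\frac{U_i(\bm{a})}{\phi_{A_i}} = f(\bm{a}) - \frac{c_{A_i}(a_{A_i})}{\phi_{A_i}} .
\]
Then write
\[
\Gamma(\bm{a};\bm{\phi}) = \frac{U_i(\bm{a})}{\phi_{A_i}} - \sum_{j \neq i} \frac{c_{A_j}(a_{A_j})}{\phi_{A_j}}.
\]
The subtracted sum depends only on $\bm{a}_{-i}$. It therefore cancels when we take the difference of $\Gamma$ at two profiles that differ only in coordinate $i$. Multiplying through by $\phi_{A_i}$ gives the required identity.

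The argument has no real obstacle; it is bookkeeping. The one point to be careful about is that the expected payment reduces to $\phi_{A_i} f(\bm{a})$ for every compatible $F$. This uses linearity of the contract and the mean constraint. It also makes the potential independent of $F$, which is what lets the result feed into the ex-post robust equilibrium notion of \Cref{sec:team-production}.

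As a remark I would add that maximizers of $\Gamma$ over the compact set $[0,a_{\max}]^n$ exist whenever $f$ is continuous, and such maximizers are pure Nash equilibria. Because the potential does not depend on $F$, these are also ex-post robust equilibria. This is presumably how the proposition is used later.
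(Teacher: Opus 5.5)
Your proposal is correct and follows essentially the same argument as the paper, which also checks directly that each agent's unilateral utility difference equals $\phi_{A_i}$ times the corresponding difference in $\Gamma$. Your added observation, that under a linear contract the expected payment $\phi_{A_i} f(\bm{a})$ is the same for every compatible distribution, is correct and makes explicit a step the paper leaves implicit.
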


\begin{proof}
For each agent $i$ and any unilateral deviation from $a_{A_i}$ to $\bar{a}_{A_i}$:
\[
V_{A_i}(\bar{a}_{A_i}, {a}_{-i}; \bm{\phi}) - V_{A_i}(a_{A_i}, {a}_{-i}; \bm{\phi}) = \phi_{A_i} \left[ \Gamma(\bar{a}_{A_i}, {a}_{-i}; \bm{\phi}) - \Gamma(a_{A_i}, {a}_{-i}; \bm{\phi}) \right].
\]
Thus utility differences are proportional to potential differences, confirming the weighted potential game structure.
\end{proof}

When the production function $f$ is strictly concave, the potential $\Gamma$ is strictly concave in $\bm{a}$, so the unique equilibrium is
\[
\bm{a}(\bm{\phi}) = \arg\max_{\bm{a}} \Gamma(\bm{a}; \bm{\phi}).
\]

We focus on the case where the production function $f(\bm{a})$ is homogeneous of degree $k \in (0,1)$ and strictly concave. For simplicity, we normalize agent costs to $c_{A_i}(a_{A_i}) = a_{A_i}$; this is without loss of generality via an appropriate change of variables.

Define $\Gamma^*$ as the maximum potential achievable under budget-balanced linear contracts:
\[
\Gamma^* = \max_{\bm{\phi} \in \Delta} \Gamma(\bm{a}(\bm{\phi}); \bm{\phi}),
\]
where $\Delta = \{\bm{\phi} \geq 0 : \sum_i \phi_{A_i} = 1\}$ denotes budget balance.

\begin{theorem}\label{prop:team-production-optimal-gross-output}
Under the homogeneity assumptions above:
\begin{enumerate}
    \item The equilibrium potential satisfies $\Gamma(\bm{a}(\bm{\phi}); \bm{\phi}) = (1-k) f(\bm{a}(\bm{\phi}))$.
    \item The optimal potential admits the closed-form characterization
    \[
    \Gamma^* = \max_{\bm{a} \geq 0} \left\{ f(\bm{a}) - \left(\sum_i \sqrt{a_{A_i}}\right)^2 \right\}.
    \]
\end{enumerate}
\end{theorem}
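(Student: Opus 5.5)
The plan is to work directly with the potential $\Gamma(\bm{a};\bm{\phi}) = f(\bm{a}) - \sum_i a_{A_i}/\phi_{A_i}$ from \Cref{prop:potential-game}, using the normalization $c_{A_i}(a_{A_i}) = a_{A_i}$. Two ingredients drive everything: the first-order (KKT) conditions characterizing $\bm{a}(\bm{\phi}) = \arg\max_{\bm{a}} \Gamma(\bm{a};\bm{\phi})$, and Euler's theorem for $f$, in the same spirit as \Cref{obs:euler-payment}.

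\textbf{Part 1.} Fix $\bm{\phi}$ with all $\phi_{A_i} > 0$.
\begin{enumerate}
\item Since $\Gamma(\cdot;\bm{\phi})$ is strictly concave, $\bm{a}(\bm{\phi})$ is its unique maximizer over the box $[0,a_{\max}]^n$. Assume first that the cap $a_{\max}$ does not bind.
\item The KKT conditions read $\partial_i f(\bm{a}) \le 1/\phi_{A_i}$, with equality whenever $a_{A_i} > 0$. Hence complementary slackness gives, for every $i$,
\[
a_{A_i}\bigl(\phi_{A_i}\,\partial_i f(\bm{a}) - 1\bigr) = 0.
\]
\item Summing, $\sum_i a_{A_i}/\phi_{A_i} = \sum_i a_{A_i}\,\partial_i f(\bm{a})$.
\item By Euler's theorem the right-hand side equals $k\, f(\bm{a})$, so $\Gamma(\bm{a}(\bm{\phi});\bm{\phi}) = f(\bm{a}(\bm{\phi})) - k f(\bm{a}(\bm{\phi})) = (1-k)f(\bm{a}(\bm{\phi}))$.
\item Corner equilibria need no separate treatment: by essentiality, any zero coordinate forces $f = 0$, and then both sides vanish.
\end{enumerate}

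\textbf{Part 2.} The key step is an interchange of maximizations.
\begin{enumerate}
\item Because $\bm{a}(\bm{\phi})$ maximizes $\Gamma(\cdot;\bm{\phi})$, we have $\Gamma(\bm{a}(\bm{\phi});\bm{\phi}) = \max_{\bm{a}} \Gamma(\bm{a};\bm{\phi})$. Therefore
\[
\Gamma^* = \max_{\bm{\phi}\in\Delta}\max_{\bm{a}\ge 0}\Bigl[f(\bm{a}) - \sum_i \frac{a_{A_i}}{\phi_{A_i}}\Bigr] = \max_{\bm{a}\ge 0}\Bigl[f(\bm{a}) - \min_{\bm{\phi}\in\Delta}\sum_i \frac{a_{A_i}}{\phi_{A_i}}\Bigr].
\]
\item The inner minimization is explicit. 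By Cauchy--Schwarz,
\[
\Bigl(\sum_i \sqrt{a_{A_i}}\Bigr)^2 = \Bigl(\sum_i \sqrt{a_{A_i}/\phi_{A_i}}\,\sqrt{\phi_{A_i}}\Bigr)^2 \le \Bigl(\sum_i \frac{a_{A_i}}{\phi_{A_i}}\Bigr)\Bigl(\sum_i \phi_{A_i}\Bigr) = \sum_i \frac{a_{A_i}}{\phi_{A_i}}.
\]
\item Equality holds at $\phi_{A_i} = \sqrt{a_{A_i}}/\sum_j \sqrt{a_{A_j}}$. Substituting gives the stated formula.
\item As a by-product, the optimal budget-balanced shares are proportional to $\sqrt{a_{A_i}}$ at the maximizing $\bm{a}$.
\end{enumerate}

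\textbf{Main obstacle.} The delicate part is boundary bookkeeping rather than the algebra.
\begin{enumerate}
\item \emph{Zero shares.} Elements of $\Delta$ may have $\phi_{A_i} = 0$. I would adopt the convention $a/0 = +\infty$ for $a > 0$ and $0/0 = 0$, so the minimum over $\Delta$ is still attained and Cauchy--Schwarz still holds, with a limiting argument if needed.
\item \emph{The cap $a_{\max}$.} This constraint could bind in the KKT step of Part 1. I would either assume $a_{\max}$ is large enough to be slack, or note that Part 2 only needs the identity $\Gamma(\bm{a}(\bm{\phi});\bm{\phi}) = \max_{\bm{a}}\Gamma(\bm{a};\bm{\phi})$, which holds regardless. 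In that case the $\max_{\bm{a}\ge 0}$ in Part 2 is read over the feasible box.
\item \emph{Differentiability of $f$ on the boundary.} Euler's theorem is invoked only at points where all efforts are positive; at points with a zero coordinate, essentiality gives $f = 0$ and Part 1 is trivial.
\end{enumerate}
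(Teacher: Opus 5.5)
Your proposal is correct and follows the paper's proof essentially step for step. For part 1, first-order conditions plus Euler's theorem give $\sum_i a_{A_i}/\phi_{A_i} = k f(\bm a(\bm\phi))$. For part 2, you swap the two maximizations and solve the inner minimization of $\sum_i a_{A_i}/\phi_{A_i}$ over $\Delta$ at $\phi_{A_i}\propto\sqrt{a_{A_i}}$. Your complementary-slackness and Cauchy--Schwarz arguments make rigorous what the paper asserts directly, since the paper uses interior first-order conditions and states the inner minimizer without proof. The one small elision is in your corner remark: ``both sides vanish'' needs the observation that $\Gamma(\bm 0;\bm\phi)=0$ forces all efforts to be zero at any maximizer where $f=0$.
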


The first part follows from Euler's theorem applied to the equilibrium conditions. The second part derives from optimizing over the contract shares $\bm{\phi} \in \Delta$: for fixed effort profile $\bm{a}$, the optimal shares minimize $\sum_i a_{A_i}/\phi_{A_i}$ subject to $\sum_i \phi_{A_i} = 1$, yielding $\phi_{A_i}^* \propto \sqrt{a_{A_i}}$ with minimum value $(\sum_i \sqrt{a_{A_i}})^2$. Based on this characterization, we give approximation guarantees and computational results on linear contracts, which appear in \Cref{sec:team-production-app}.

\section{Conclusion}
This paper establishes that linear contracts are worst-case optimal when the principal knows only the expected outcome $E[x|a]$ for each action, but not the full distribution. Our proof combines a concavification approach with the characterization of self-inducing actions. We show that a linear contract is worst-case optimal among all distributions consistent with the known first-moment structure. This result complements \cite{carroll2015robustness} analysis of technology set ambiguity with a distributional ambiguity perspective. We extend the concavification approach to multi-principal and multi-agent environments: in common agency, affine contracts improve all principals' worst-case payoffs, while in team production, non-affine contracts force zero effort. Lastly, we showed that imposing homogeneity enables tractable characterizations, and give approximation guarantees as well as computation results across all settings.

\newpage
\bibliographystyle{apalike}  
\bibliography{references}  

\appendix

\section{Proofs of Main Robustness Result in Bilateral Setting}

\begin{proof}[Proof of \Cref{lem:existence}]
Consider the optimization problem
\[
\min_{a \in \A} \left[ c(a) - \wbar(a) \right].
\]
First note that a minimizer exists. The function $c(a)$ is strictly convex and $\wbar(a)$ is concave (as the concave envelope of $w$). Hence the objective $c(a) - \wbar(a)$ is strictly convex. Since the domain $\A = \operatorname{conv}(\X)$ is compact, a minimizer exists.
Denote $\subdiff$ the subdifferential (for convex functions), $\supdiff$ the superdifferential (for concave functions). At the minimizer $a^\sharp$, we have $0 \in \subdiff[c(a) - \wbar(a)]$. Hence $\subdiff c(a) \cap \supdiff \wbar(a) \neq \varnothing$. This precisely means that $a^\sharp$ is self-inducing.
\end{proof}
\begin{proof}[Proof of \Cref{lem:distribution}]
Let $\psi^\sharp$ support $\wbar(a^\sharp)$ from above. Define the contact set
\[
P = \{x \in \X : \psi^\sharp(x) = w(x)\},
\]
which consists of outcomes where the affine contract $\psi^\sharp$ touches the original contract $w$. Since $\psi^\sharp$ is a supergradient of $\wbar$ at $a^\sharp$, we have $a^\sharp \in \operatorname{conv}(P)$.
Write $a^\sharp = \sum_{x \in P} \lambda_x x$ for some $\lambda_x \geq 0$ with $\sum_{x \in P} \lambda_x = 1$. Define the distribution $F$ by setting $F(a^\sharp)$ to be supported on $P$ with $\Pr(x) = \lambda_x$. Then $\E_{F(a^\sharp)}[x] = a^\sharp$, so $F$ is compatible.
For any action $a \in \A$, the agent's payoff under $w$ is at most $\wbar(a) - c(a)$ (since $\wbar(a)$ is the maximum expected payment over all compatible distributions for action $a$). Since $\phi \in \supdiff \wbar(a^\sharp)$:
\[
\wbar(a) \leq \wbar(a^\sharp) + \inner{\phi}{a - a^\sharp}.
\]
Since $\phi \in \subdiff c(a^\sharp)$:
\[
c(a) \geq c(a^\sharp) + \inner{\phi}{a - a^\sharp}.
\]
Combining:
\[
\wbar(a) - c(a) \leq \wbar(a^\sharp) - c(a^\sharp),
\]
and the inequality is strict for any $a \neq a^\sharp$.
At action $a^\sharp$ under distribution $F(a^\sharp)$, the agent's payoff is exactly $\E_{F(a^\sharp)}[w(x)] - c(a^\sharp)$. Since $F(a^\sharp)$ is supported on $P$ where $w(x) = \psi(x)$:
\[
\E_{F(a^\sharp)}[w(x)] = \E_{F(a^\sharp)}[\psi(x)] = \wbar(a^\sharp).
\]
Thus the agent's payoff at $a^\sharp$ equals $\wbar(a^\sharp) - c(a^\sharp)$, which is maximal. Hence $w$ induces $a^\sharp$ under $F$.
\end{proof}

\Cref{lem:affine-to-linear}] is adapted from \cite{roth2016watch}. We defer this to \Cref{proof:affine-to-linear}.

\begin{proof}[Proof of \Cref{thm:robustness}]
Let $w : \X \to \R_+$ be any contract.
By \Cref{lem:existence}, there exists a self-inducing action $a^\sharp$ with corresponding affine contract $\psi^\sharp$ such that $\psi$ induces $a^\sharp$ and $\psi^\sharp(x) \geq w(x)$ for all $x \in \X$. By \Cref{lem:distribution}, there exists a compatible distribution $F$ such that:
\begin{itemize}
    \item Both $w$ and $\psi^\sharp$ induce action $a^\sharp$ under $F$.
    \item The principal's payoff satisfies $V_P(\psi^\sharp; F) \geq V_P(w; F)$.
\end{itemize}
Under any compatible distribution $F'$, the agent's expected payment from affine contract $\psi^\sharp$ when taking action $a$ is
\[
\E_{F'(a)}[\psi(x)] = \psi_0^\sharp + \inner{\psi_{1:d}^\sharp}{\E_{F'(a)}[x]} = \psi_0^\sharp + \inner{\psi_{1:d}^\sharp}{a}.
\]
This depends only on $a$, not on the specific distribution $F'$. Therefore, the principal's worst-case payoff under $\psi^\sharp$ equals her payoff under any particular compatible distribution:
\[
V_P(\psi^\sharp) = V_P(\psi^\sharp; F).
\]
Then we have
\[
V_P(\psi^\sharp) = V_P(\psi^\sharp; F) \geq V_P(w; F) \geq \min_{F' \text{ compatible}} V_P(w; F') = V_P(w).
\]
By \Cref{lem:affine-to-linear}, among all affine contracts inducing action $a^\sharp$, the linear contract $\phi = \nabla c(a^\sharp)$ achieves the minimum expected payment. Since both $\psi^\sharp$ and $\phi$ induce $a^\sharp$, and $\phi$ pays weakly less:
\[
V_P(\phi) = u(a^\sharp) - \inner{\nabla c(a^\sharp)}{a^\sharp} \geq u(a^\sharp) - \left( \inner{\psi_{1:d}^\sharp}{a^\sharp} + \psi_0^\sharp \right) = V_P(\psi^\sharp).
\]
Combining all steps:
\[
V_P(\phi) \geq V_P(\psi^\sharp) \geq V_P(w).
\]
Since $w$ was arbitrary, linear contracts are optimal for the principal's worst-case objective.
\end{proof}

\section{Proof of Multilateral Robustness}
\begin{proof} [{Proof of \Cref{thm:robust-common-agency}}]

Let $(w_1, w_2)$ be any pair of contracts. Define the aggregate contract $w = w_1 + w_2$. Since the agent's expected utility under $(w_1, w_2)$ depends only on the aggregate payment, the agent's behavior is determined by $w$.
By the analysis in \Cref{sec:robustness}, applied to the aggregate contract $w$, there exists a self-inducing action $a^\sharp$ with corresponding affine contract $\psi$ such that:
\begin{itemize}
    \item $\psi$ supports the concave envelope $\overline{w}$ at $a^\sharp$ from above, and
    \item $\psi$ induces action $a^\sharp$.
\end{itemize}
By Carathéodory's theorem, there exists a compatible distribution $F$ supported on at most $d+1$ points $\{x_1, \ldots, x_m\}$ with $m \leq d+1$ such that:
\begin{itemize}
    \item $\sum_{k=1}^m \lambda_k x_k = a^\sharp$ for some $\lambda_k \geq 0$ with $\sum_k \lambda_k = 1$, and
    \item $\psi(x_k) = w(x_k) = w_1(x_k) + w_2(x_k)$ for all support points $x_k$.
\end{itemize}
Since $m \leq d+1$ and an affine function on $\mathbb{R}^d$ has $d+1$ degrees of freedom, we can construct affine contracts $\psi_1, \psi_2$ such that:
\[
\psi_1(x_k) = w_1(x_k) \quad \text{and} \quad \psi_2(x_k) = w_2(x_k) \quad \text{for all } k = 1, \ldots, m.
\]
Under distribution $F$:
\begin{enumerate}
    \item The aggregate $\psi_1 + \psi_2$ agrees with $w_1 + w_2$ on all support points, so the agent's expected utility from any action is the same under $(\psi_1, \psi_2)$ as under $(w_1, w_2)$. Hence the agent takes action $a^\sharp$ under both contract pairs.
    
    \item Principal $i$'s expected payment under $\psi_i$ equals that under $w_i$:
    \[
    \Ex_{x \sim F(a^\sharp)}[\psi_i(x)] = \sum_{k=1}^m \lambda_k \psi_i(x_k) = \sum_{k=1}^m \lambda_k w_i(x_k) = \Ex_{x \sim F(a^\sharp)}[w_i(x)].
    \]
\end{enumerate}
Therefore, under distribution $F$, each principal's payoff is the same under $(\psi_1, \psi_2)$ as under $(w_1, w_2)$.
Now we compare worst-case payoffs. For affine contracts, the agent's expected payment depends only on the expected signal, so payoffs are invariant across all compatible distributions:
\[
V_{P_i}(\psi_1, \psi_2; F) = V_{P_i}(\psi_1, \psi_2; F') \quad \text{for all compatible } F, F'.
\]
For the original contracts $(w_1, w_2)$, the worst-case payoff satisfies:
\[
V_{P_i}(w_1, w_2) = \inf_{F' \text{ compatible}} V_{P_i}(w_1, w_2; F') \leq V_{P_i}(w_1, w_2; F) = V_{P_i}(\psi_1, \psi_2; F) = V_{P_i}(\psi_1, \psi_2).
\]
Hence both principals are weakly better off under the affine contract pair $(\psi_1, \psi_2)$. 
\end{proof}

\begin{proof}[Proof of \Cref{thm:team-robust}]
We proceed in two steps.

\medskip
\noindent\textit{Step 1: Any ex-post robust equilibrium is pure.}
Suppose $\bm{\sigma}$ is an ex-post robust equilibrium. Fix agent $i$ and suppose the other agents play $\bm{\sigma}_{-i}$. Agent $i$'s expected utility from action $a_{A_i}$ is
\[
\Ex_{\substack{x \sim F(a_{A_i}, \bm{a}_{-i}) \\ \bm{a}_{-i} \sim \bm{\sigma}_{-i}}}[w_{A_i}(x)] - c_{A_i}(a_{A_i}).
\]
Choose $F$ to be the distribution that realizes the upper concave envelope $\overline{w}_{A_i}$ of $w_{A_i}$. Then agent $i$'s payoff becomes
\[
\Ex_{\bm{a}_{-i} \sim \bm{\sigma}_{-i}}\bigl[\overline{w}_{A_i}(f(a_{A_i}, \bm{a}_{-i}))\bigr] - c_{A_i}(a_{A_i}).
\]
Since $\overline{w}_{A_i}$ is concave and $c_{A_i}$ is strictly convex, this expression is strictly concave in $a_{A_i}$ for any fixed $\bm{a}_{-i}$. Hence the optimal action is unique, implying that any ex-post robust equilibrium must be in pure strategies.

\medskip
\noindent\textit{Step 2: The only pure-strategy ex-post robust equilibrium is $\bm{0}$.}
Consider any non-zero pure strategy profile $\bm{a}$. We construct a compatible distribution under which $\bm{a}$ is not an equilibrium. Since at least one contract $w_{A_i}$ is not affine, its concave envelope $\overline{w}_{A_i}$ and convex envelope $\underline{w}_{A_i}$ differ, and that in particular $\overline{w}_{A_i}(f(\ba)) > \underline{w}_{A_i}(f(\ba))$. Any payment function $\hat{w}_{A_i}$ satisfying $\underline{w}_{A_i} \leq \hat{w}_{A_i} \leq \overline{w}_{A_i}$ can be realized by some compatible distribution.

Choose a distribution such that the realized payment function $\hat{w}_{A_i}$ is constant in a neighborhood of $f(\bm{a})$. Specifically, for some $\eps > 0$:
\[
\hat{w}_{A_i}\bigl(f(a_{A_i} - \eps, \bm{a}_{-i})\bigr) = \hat{w}_{A_i}\bigl(f(\bm{a})\bigr).
\]
Under this distribution, agent $i$ receives the same expected payment from actions $a_{A_i}$ and $a_{A_i} - \eps$, but incurs strictly lower cost at $a_{A_i} - \eps$. Hence agent $i$ strictly prefers to deviate, so $\bm{a}$ is not an equilibrium under this distribution.
\end{proof}

\section{Proofs for \Cref{sec:homogeneous}}
\label{proof:homogeneous-results}

\subsection{Bilateral Multitasking}
\label{proof:homogeneous-results-bilateral}
\begin{proof}[Proof of \Cref{thm:approximation-bilateral}]

In the first-best, the principal pays the agent $w(a) = c(a)$ when the agent chooses action $a$, and the optimal action $a^*$ is
\[
a^* \in \arg\max_{a} \left[ u(a) - c(a) \right].
\]
The first-order condition gives $\nabla u(a^*) = \nabla c(a^*)$.

By Euler's theorem for homogeneous functions, we have:
\begin{align*}
\langle \nabla u(a^*), a^* \rangle &= k_u u(a^*), \\
\langle \nabla c(a^*), a^* \rangle &= k_c c(a^*).
\end{align*}

Since $\nabla u(a^*) = \nabla c(a^*)$, we obtain $k_u u(a^*) = k_c c(a^*)$, which gives us
\[
c(a^*) = \frac{k_u}{k_c} u(a^*).
\]

The first-best optimal utility is:
\[
\mathtt{OPT}_{\text{FB}} = u(a^*) - c(a^*) = u(a^*) - \frac{k_u}{k_c} u(a^*) = u(a^*) \frac{k_c - k_u}{k_c}.
\]

In the \emph{second-best} scenario with \emph{linear contracts}, the agent maximizes $\langle \phi, a \rangle - c(a)$, yielding the first-order condition $\phi = \nabla c(a)$. By Euler's theorem, the payment to the agent is 
\[
\langle \phi, a \rangle = \langle \nabla c(a), a \rangle = k_c c(a).
\]

If the principal induces action $a$ from the agent, his surplus under a linear contract is then:
\[
u(a) - k_c c(a).
\]

The optimal linear contract induces action:
\[
\tilde{a} = \arg\max_{a} \left[ u(a) - k_c c(a) \right],
\]
with first-order condition $\nabla u(\tilde{a}) = k_c \nabla c(\tilde{a})$.

By homogeneity, $\nabla u(\lambda a) = \lambda^{k_u-1} \nabla u(a)$ and $\nabla c(\lambda a) = \lambda^{k_c-1} \nabla c(a)$. Since $\nabla u(a^*) = \nabla c(a^*)$, we know that $\tilde{a} = \lambda a^*$ for some $\lambda > 0$. Then:
\begin{align*}
\nabla u(\tilde{a}) &= \lambda^{k_u-1} \nabla u(a^*) = \lambda^{k_u-1} \nabla c(a^*), \\
k_c \nabla c(\tilde{a}) &= k_c \lambda^{k_c-1} \nabla c(a^*).
\end{align*}

For the first-order condition to hold:
\[
\lambda^{k_u-1} = k_c \lambda^{k_c-1} \implies \lambda^{k_u-k_c} = k_c \implies \lambda = k_c^{1/(k_u-k_c)}.
\]

Since $k_u < k_c$, we have $k_u - k_c < 0$, so $\lambda = k_c^{-1/(k_c-k_u)} < 1$.

The linear contract optimal utility is:
\begin{align*}
\mathtt{LIN}_{\text{FB}} &= u(\tilde{a}) - k_c c(\tilde{a}) \\
&= u(\lambda a^*) - k_c c(\lambda a^*) \\
&= \lambda^{k_u} u(a^*) - k_c \lambda^{k_c} c(a^*) \\
&= \lambda^{k_u} u(a^*) - k_c \lambda^{k_c} \cdot \frac{k_u}{k_c} u(a^*) \\
&= u(a^*) (\lambda^{k_u} - k_u \lambda^{k_c}).
\end{align*}

The approximation ratio is:
\begin{align*}
\frac{\mathtt{LIN}_{\text{FB}}}{\mathtt{OPT}_{\text{FB}}} &= \frac{u(a^*) (\lambda^{k_u} - k_u \lambda^{k_c})}{u(a^*) \frac{k_c - k_u}{k_c}} \\
&= \frac{k_c(\lambda^{k_u} - k_u \lambda^{k_c})}{k_c - k_u}.
\end{align*}

Substituting $\lambda = k_c^{-1/(k_c-k_u)}$, we have:
\begin{align*}
\lambda^{k_u} &= k_c^{-k_u/(k_c-k_u)}, \\
\lambda^{k_c} &= k_c^{-k_c/(k_c-k_u)}.
\end{align*}

Note that 
\[
-\frac{k_u}{k_c-k_u} = -\frac{k_c}{k_c-k_u} + \frac{k_c-k_u}{k_c-k_u} = -\frac{k_c}{k_c-k_u} + 1,
\]
which implies $\lambda^{k_u} = k_c \cdot \lambda^{k_c}$. Therefore:
\begin{align*}
\lambda^{k_u} - k_u \lambda^{k_c} &= k_c \lambda^{k_c} - k_u \lambda^{k_c} \\
&= (k_c - k_u) \lambda^{k_c} \\
&= (k_c - k_u) k_c^{-k_c/(k_c-k_u)}.
\end{align*}

Substituting back:
\begin{align*}
\frac{\mathtt{LIN}_{\text{FB}}}{\mathtt{OPT}_{\text{FB}}} &= \frac{k_c \cdot (k_c - k_u) k_c^{-k_c/(k_c-k_u)}}{k_c - k_u} \\
&= k_c \cdot k_c^{-k_c/(k_c-k_u)} \\
&= k_c^{1 - k_c/(k_c-k_u)} \\
&= k_c^{(k_c-k_u-k_c)/(k_c-k_u)} \\
&= k_c^{-k_u/(k_c-k_u)}. 
\end{align*} 
This finishes the proof. 
\end{proof}

\subsection{Common Agency}
\label{sec:common-agency-app}

\begin{proof} [Proof of \Cref{thm:common-agency-equilibrium}]
We characterize the equilibrium by deriving first-order conditions for all parties and showing that a solution exists. For simplicity we consider the two principal setting, though the idea extends easily to the more general case. 

\paragraph{Agent's optimization.}
Given linear contracts $\phi_1, \phi_2 \in (\mathbb{R}_+)^d$ from the two principals, the agent's expected utility is
\[
V_A(a; \phi_1, \phi_2) = \langle \phi_1 + \phi_2, a \rangle - c(a).
\]
The agent's first-order condition is:
\begin{equation}
\label{eq:agent-foc}
\phi_1 + \phi_2 = \nabla c(a).
\end{equation}

\paragraph{Principal $i$'s optimization.}
Fix principal $j$'s linear contract $\phi_j$. Principal $i$ chooses which action $a$ to induce. By \eqref{eq:agent-foc}, inducing action $a$ requires $\phi_i = \nabla c(a) - \phi_j$. Principal $i$'s payoff is:
\[
V_{P_i}(a \mid \phi_j) = u_i(a) - \langle \nabla c(a) - \phi_j, a \rangle = u_i(a) - \langle \nabla c(a), a \rangle + \langle \phi_j, a \rangle.
\]
Since $c$ is homogeneous of degree $k$, Euler's theorem gives $\langle \nabla c(a), a \rangle = k \cdot c(a)$. Substituting:
\[
V_{P_i}(a \mid \phi_j) = u_i(a) - k \cdot c(a) + \langle \phi_j, a \rangle.
\]
Taking the gradient with respect to $a$ and setting it to zero yields principal $i$'s first-order condition:
\begin{equation}
\label{eq:principal-foc}
\phi_j = k \cdot \nabla c(a) - \nabla u_i(a).
\end{equation}
By symmetry, principal $j$'s first-order condition is:
\begin{equation}
\label{eq:principal-foc-j}
\phi_i = k \cdot \nabla c(a) - \nabla u_j(a).
\end{equation}

\paragraph{Equilibrium characterization.}
Adding equations \eqref{eq:agent-foc}, \eqref{eq:principal-foc}, and \eqref{eq:principal-foc-j}, we obtain:
\[
\nabla c(a) = \frac{\nabla u_1(a) + \nabla u_2(a)}{2k - 1}.
\]
The above equation holds at $a^*$, where $a^*$ is the minimizer to 
\[
(2k-1) c(a) - u_{P_1}(a) - u_{P_2}(a). 
\]


Given $a^*$, we define the equilibrium contracts:
\[
\phi_1^* = k \cdot \nabla c(a^*) - \nabla u_2(a^*), \qquad \phi_2^* = k \cdot \nabla c(a^*) - \nabla u_1(a^*).
\]
Then $\phi_{P_1}^*$ and $\phi_{P_2}$ is the unique Nash equilibrium. 
\end{proof}

Given the characterization of the unique equilibrium, we can now analyze the approximation ratio. Here, we compare the optimal \emph{social welfare} in the first-best scenario vs.\ under the equilibrium when using linear contracts. The first-best social welfare is defined as: 
\[
SW^*_{\FB} = \max_a u_{P_1}(a) + u_{P_2}(a) - c(a). 
\]
And the second-best using linear contracts is defined as:
\[
SW^*_{\LIN} = u_{P_1}(a) + u_{P_2}(a) - c(a),
\]
where $a$ is the action induced in the equilibrium as defined in \Cref{thm:common-agency-equilibrium}. 

We assume the cost function $c$ is homogeneous of degree $k_c$. The utility functions $u_{P_1}, u_{P_2}$ are homogeneous of degree $k_u$.

\begin{theorem}
\label{thm:approximation-common-agency}
Under the above assumption, the approximation ratio is 
\[
\frac{SW^*_{\LIN}}{SW^*_{\FB}} = \frac{k_c (2k_c - 1)^{-\frac{k_u}{k_c - k_u}} - k_u (2k_c - 1)^{-\frac{k_c}{k_c - k_u}}}{k_c - k_u}.
\]
\end{theorem}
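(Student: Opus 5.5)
The plan mirrors the proof of \Cref{thm:approximation-bilateral}: find the first-best action $a^*$ and the equilibrium action $\tilde a$ from first-order conditions, show $\tilde a$ is a scalar multiple of $a^*$ by homogeneity, then evaluate both welfare expressions with Euler's theorem. Write $U = u_{P_1} + u_{P_2}$. Since both $u_{P_i}$ are homogeneous of degree $k_u$, so is $U$, and $\nabla U$ is homogeneous of degree $k_u - 1$.

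\textbf{First best.} The optimum $a^*$ satisfies $\nabla U(a^*) = \nabla c(a^*)$. Pairing with $a^*$ and applying Euler's theorem gives $k_u U(a^*) = k_c c(a^*)$, so $c(a^*) = \frac{k_u}{k_c} U(a^*)$. Hence
\[
SW^*_{\FB} = U(a^*)\,\frac{k_c - k_u}{k_c}.
\]

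\textbf{Equilibrium action.} By \Cref{thm:common-agency-equilibrium} with $n = 2$, the equilibrium action $\tilde a$ satisfies $\nabla U(\tilde a) = (2k_c - 1)\nabla c(\tilde a)$. I will try the ansatz $\tilde a = \lambda a^*$. Homogeneity of the gradients turns the condition into $\lambda^{k_u - 1}\nabla c(a^*) = (2k_c - 1)\lambda^{k_c - 1}\nabla c(a^*)$. This holds when
\[
\lambda = (2k_c - 1)^{-1/(k_c - k_u)},
\]
which lies in $(0,1)$ because $k_c > 1 > k_u$ makes $2k_c - 1 > 1$. To justify that this is \emph{the} equilibrium action, I will use \Cref{prop:computation}: $\tilde a$ maximizes the strictly concave function $U - (2k_c - 1)c$, so the point $\lambda a^*$ satisfying its first-order condition is the maximizer. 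I expect this identification, that is, uniqueness, together with assuming $\nabla c(a^*) \neq 0$ (i.e.\ $a^* \neq 0$), to be the only delicate step. The rest is algebra.

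\textbf{Welfare ratio.} Unlike the bilateral case, welfare subtracts $c$ itself rather than $k_c c$:
\[
SW^*_{\LIN} = U(\lambda a^*) - c(\lambda a^*) = \lambda^{k_u}U(a^*) - \lambda^{k_c}\frac{k_u}{k_c}U(a^*).
\]
Dividing by $SW^*_{\FB}$ gives
\[
\frac{SW^*_{\LIN}}{SW^*_{\FB}} = \frac{k_c\lambda^{k_u} - k_u\lambda^{k_c}}{k_c - k_u}.
\]
Substituting $\lambda^{k_u} = (2k_c - 1)^{-k_u/(k_c - k_u)}$ and $\lambda^{k_c} = (2k_c - 1)^{-k_c/(k_c - k_u)}$ yields exactly the stated expression. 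No further simplification is needed: here the identity $\lambda^{k_u} = (2k_c - 1)\lambda^{k_c}$ does not collapse the numerator to a single term as it did in the bilateral proof, and the stated formula keeps both terms.
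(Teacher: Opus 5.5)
Your proposal is correct and follows the paper's proof essentially step for step: the first-order condition plus Euler's theorem at the first best, the scaling ansatz $a^*_{\LIN} = \lambda a^*_{\FB}$ with $\lambda = (2k_c-1)^{-1/(k_c-k_u)}$, and the ratio $(k_c\lambda^{k_u} - k_u\lambda^{k_c})/(k_c-k_u)$. Normalizing by $U(a^*)$ rather than $c(a^*)$ changes nothing, and appealing to strict concavity of $U - (2k_c-1)c$ to confirm the ansatz is a small rigor gain over the paper's ``we posit'' (your aside is slightly off, though: via $\lambda^{k_u} = (2k_c-1)\lambda^{k_c}$ the numerator does collapse to $(2k_c^2 - k_c - k_u)\lambda^{k_c}$; it just does not cancel the denominator as in the bilateral case).
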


Similarly as in the bilateral setting, the analysis shows that the agent's action in first-best is a scaled version of the action in the linear contract equilibrium (by homogeneity). Specifically, $a^*_{\mathsf{LIN}} = \gamma \cdot a^*_{\mathsf{FB}}$ where $\gamma = (2k_c - 1)^{-1/(k_c - k_u)} < 1$. Applying the first-order conditions together with Euler's theorem for homogeneous functions yields the approximation ratio. 

\begin{proof}[Proof of \Cref{thm:approximation-common-agency}]
We characterize the relation between the optimal action to induce under first-best and second-best, then apply Euler's theorem to obtain the approximation ratio. 
\paragraph{First-Best Analysis.}
In the first-best, the social planner solves
\[
a^*_{\mathrm{FB}} = \arg\max_a \; u_{P_1}(a) + u_{P_2}(a) - c(a).
\]
The first-order condition at $a^*_{\mathrm{FB}}$ is
\[
\nabla u_{P_1}(a^*_{\mathrm{FB}}) + \nabla u_{P_2}(a^*_{\mathrm{FB}}) = \nabla c(a^*_{\mathrm{FB}}).
\]
Multiplying both sides by $a^*_{\mathrm{FB}}$ and applying Euler's theorem for homogeneous functions (i.e., $a \cdot \nabla f(a) = k \cdot f(a)$ for $f$ homogeneous of degree $k$), we obtain
\[
k_u \bigl[u_{P_1}(a^*_{\mathrm{FB}}) + u_{P_2}(a^*_{\mathrm{FB}})\bigr] = k_c \, c(a^*_{\mathrm{FB}}).
\]
Rearranging:
\[
u_{P_1}(a^*_{\mathrm{FB}}) + u_{P_2}(a^*_{\mathrm{FB}}) = \frac{k_c}{k_u} \, c(a^*_{\mathrm{FB}}).
\]
Thus, the first-best social welfare is
\[
\mathrm{SW}_{\mathrm{FB}} = \frac{k_c}{k_u} \, c(a^*_{\mathrm{FB}}) - c(a^*_{\mathrm{FB}}) = \frac{k_c - k_u}{k_u} \, c(a^*_{\mathrm{FB}}).
\]

\paragraph{Linear Contract Equilibrium.}
Under linear contracts, the equilibrium action $a^*_{\mathrm{LIN}}$ satisfies
\[
\nabla c(a^*_{\mathrm{LIN}}) = \frac{\nabla u_{P_1}(a^*_{\mathrm{LIN}}) + \nabla u_{P_2}(a^*_{\mathrm{LIN}})}{2k_c - 1}.
\]

Since $\nabla c$ is homogeneous of degree $k_c - 1$ and $\nabla u_{P_i}$ is homogeneous of degree $k_u - 1$, we posit that $a^*_{\mathrm{LIN}} = \gamma \, a^*_{\mathrm{FB}}$ for some scalar $\gamma > 0$. Substituting into the equilibrium condition:
\[
\gamma^{k_c - 1} \nabla c(a^*_{\mathrm{FB}}) = \frac{\gamma^{k_u - 1}}{2k_c - 1} \bigl[\nabla u_{P_1}(a^*_{\mathrm{FB}}) + \nabla u_{P_2}(a^*_{\mathrm{FB}})\bigr].
\]
Using the first-best FOC, $\nabla u_{P_1}(a^*_{\mathrm{FB}}) + \nabla u_{P_2}(a^*_{\mathrm{FB}}) = \nabla c(a^*_{\mathrm{FB}})$:
\[
\gamma^{k_c - 1} = \frac{\gamma^{k_u - 1}}{2k_c - 1}.
\]
Solving for $\gamma$:
\[
\gamma^{k_c - k_u} = \frac{1}{2k_c - 1}, \quad \text{hence} \quad \gamma = (2k_c - 1)^{-\frac{1}{k_c - k_u}}.
\]

Consequently:
\[
a^*_{\mathrm{LIN}} = (2k_c - 1)^{-\frac{1}{k_c - k_u}} \, a^*_{\mathrm{FB}}.
\]

\textbf{Social Welfare under Linear Contracts.}
Using homogeneity:
\begin{align*}
u_{P_i}(a^*_{\mathrm{LIN}}) &= \gamma^{k_u} \, u_{P_i}(a^*_{\mathrm{FB}}), \\
c(a^*_{\mathrm{LIN}}) &= \gamma^{k_c} \, c(a^*_{\mathrm{FB}}).
\end{align*}
Therefore:
\begin{align*}
\mathrm{SW}_{\mathrm{LIN}} &= \gamma^{k_u} \bigl[u_{P_1}(a^*_{\mathrm{FB}}) + u_{P_2}(a^*_{\mathrm{FB}})\bigr] - \gamma^{k_c} \, c(a^*_{\mathrm{FB}}) \\
&= \gamma^{k_u} \cdot \frac{k_c}{k_u} \, c(a^*_{\mathrm{FB}}) - \gamma^{k_c} \, c(a^*_{\mathrm{FB}}) \\
&= c(a^*_{\mathrm{FB}}) \left( \frac{k_c}{k_u} \gamma^{k_u} - \gamma^{k_c} \right).
\end{align*}

\paragraph{Approximation Ratio.}
The approximation ratio is
\[
\rho = \frac{\mathrm{SW}_{\mathrm{LIN}}}{\mathrm{SW}_{\mathrm{FB}}} = \frac{\frac{k_c}{k_u} \gamma^{k_u} - \gamma^{k_c}}{\frac{k_c - k_u}{k_u}} = \frac{k_c \gamma^{k_u} - k_u \gamma^{k_c}}{k_c - k_u}.
\]
Substituting $\gamma = (2k_c - 1)^{-\frac{1}{k_c - k_u}}$:
\[
\boxed{\rho = \frac{k_c (2k_c - 1)^{-\frac{k_u}{k_c - k_u}} - k_u (2k_c - 1)^{-\frac{k_c}{k_c - k_u}}}{k_c - k_u}.}
\]
This finishes the proof. 
\end{proof}

\subsection{Team Production}
\label{sec:team-production-app}

\begin{proof}[Proof of \Cref{prop:team-production-optimal-gross-output}]
We begin with part 1. At the agent's best response $a(\phi)$, the first-order conditions give $\phi_i \frac{\partial f}{\partial a_{A_i}} = 1$, so $\frac{\partial f}{\partial a_{A_i}} = 1/\phi_i$. Together with Euler's theorem for homogeneous functions:
\begin{align*}
\Gamma(\phi) &= f(a(\phi)) - \sum_i a_{A_i} / \phi_i \\
&= f(a(\phi)) - \sum_i a_{A_i} \frac{\partial f}{\partial a_{A_i}} \\
&= f(a(\phi)) - k f(a(\phi)) \\
&= (1-k) f(a(\phi)). \
\end{align*}
For part 2, notice
\begin{align*}
\Gamma^* &:= \max_{\phi \in \Delta} \Gamma(\phi) \\
&= \max_{\phi \in \Delta} \max_a f(a) - \sum_i a_{A_i} / \phi_i \\
&= \max_a \max_{\phi \in \Delta}  f(a) - \sum_i a_{A_i} / \phi_i
\end{align*}

For fixed $a$, the inner minimization over $\phi \in \Delta$ of $\sum_i a_{A_i}/\phi_i$ is achieved at $\phi_i = \sqrt{a_{A_i}}/\sum_j \sqrt{a_{A_j}}$, yielding value $\left(\sum_i \sqrt{a_{A_i}}\right)^2$. Therefore:
\begin{align*}
\Gamma^* &= \max_{a} f(a) - \left(\sum_i \sqrt{a_{A_i}}\right)^2. 
\end{align*} 
This finishes the proof. 
\end{proof}

Applying the Cauchy--Schwarz inequality to the \Cref{prop:team-production-optimal-gross-output} yields the following corollary.

\begin{cor}\label{cor:team-cauchy}
$\displaystyle \Gamma^* \geq \max_{\bm{a}} \left\{ f(\bm{a}) - n \sum_i a_{A_i} \right\}$.
\end{cor}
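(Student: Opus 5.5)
The plan is to start from the closed-form characterization in part 2 of \Cref{prop:team-production-optimal-gross-output},
\[
\Gamma^* = \max_{\bm{a} \geq 0} \left\{ f(\bm{a}) - \Bigl(\sum_i \sqrt{a_{A_i}}\Bigr)^2 \right\},
\]
and bound the penalty term pointwise in $\bm{a}$. By Cauchy--Schwarz applied to the vectors $(1,\ldots,1)$ and $(\sqrt{a_{A_1}},\ldots,\sqrt{a_{A_n}})$ in $\R^n$,
\[
\Bigl(\sum_i \sqrt{a_{A_i}}\Bigr)^2 \leq \Bigl(\sum_i 1\Bigr)\Bigl(\sum_i a_{A_i}\Bigr) = n \sum_i a_{A_i}.
\]
Consequently, for every fixed $\bm{a} \geq 0$,
\[
f(\bm{a}) - \Bigl(\sum_i \sqrt{a_{A_i}}\Bigr)^2 \geq f(\bm{a}) - n\sum_i a_{A_i}.
\]

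Next I take the maximum over $\bm{a}$ on both sides. Since the left-hand objective dominates the right-hand one pointwise, its maximum is at least the maximum of the right-hand objective. The left-hand maximum equals $\Gamma^*$, which gives the claim. The maximizer on the right exists under the standing compactness of the effort space $[0,a_{\max}]^n$, and the objective is continuous. If the effort space were unbounded, I would phrase both sides as suprema; the pointwise-domination argument is unchanged.

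There is no real obstacle here. The only point needing care is that the optimization domains on the two sides coincide: both range over the same effort profiles $\bm{a} \geq 0$. I would also remark that equality in Cauchy--Schwarz holds exactly when all $a_{A_i}$ are equal, corresponding to the equal-share contract $\phi_{A_i} = 1/n$. So the bound is tight, for instance, for symmetric $f$ whose optimum is symmetric.
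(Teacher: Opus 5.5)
Your proposal is correct and follows the paper's route: it takes the closed form for $\Gamma^*$ from part 2 of \Cref{prop:team-production-optimal-gross-output}, applies Cauchy--Schwarz to get $\bigl(\sum_i \sqrt{a_{A_i}}\bigr)^2 \leq n \sum_i a_{A_i}$ pointwise, and maximizes both sides over $\bm{a}$. The side remark on tightness is not needed for the proof. If you keep it, note that the penalty $\bigl(\sum_i \sqrt{a_{A_i}}\bigr)^2$ favors concentrated effort, so symmetry of $f$ alone does not make the left-hand maximizer symmetric; your qualifier ``whose optimum is symmetric'' is doing the real work there.
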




\subsubsection{Computation}
We now consider the computation problem of finding the best set of linear contracts subject to budget balance. By \Cref{prop:team-production-optimal-gross-output}, this is equivalent to optimizing the potential function $\Gamma^*$, where we have already obtained a tractable characterization. 

We provide a partial characterization for the special class of supermodular and homogeneous production functions, delineating the boundary between computational tractability and intractability. We first establish a hardness result showing that no fully polynomial-time approximation scheme (FPTAS) exists.

\begin{theorem}[Computational Hardness]\label{thm:team-no-fptas}
Assume $f(\bm{a})$ is supermodular and homogeneous. The principal's optimization problem admits no FPTAS unless $\mathsf{P} = \mathsf{NP}$.
\end{theorem}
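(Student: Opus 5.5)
The plan is to give a gap-preserving reduction from \textsc{Max-Clique}, which is strongly NP-hard, through the closed form of \Cref{prop:team-production-optimal-gross-output}. That result lets us work directly with
\[
\Gamma^* = \max_{\bm a \ge 0}\Bigl\{ f(\bm a) - \Bigl(\textstyle\sum_i \sqrt{a_{A_i}}\Bigr)^2\Bigr\}.
\]
The principal's budget-balanced problem is therefore a continuous program in $\bm a$. We want to show that approximating its value within a factor $1-\eps$, with $\eps = 1/\mathrm{poly}(n)$, in time polynomial in $1/\eps$ would decide the clique number of a graph. An FPTAS would do exactly this, and the hardness then follows from the standard fact that strongly NP-hard problems with polynomially bounded optimal values admit no FPTAS unless $\mathsf P=\mathsf{NP}$.

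\textbf{Step 1: remove the scale by homogeneity.} Write $\bm a = t\bm x$, where $t>0$ and $\bm x$ lies in a normalization set such as $\sum_i \sqrt{x_i}=1$. The objective becomes $t^k f(\bm x) - t$. Maximizing over $t$ in closed form gives
\[
\Gamma^* = C_k \cdot \Bigl(\max_{\bm x} R(\bm x)\Bigr)^{1/(1-k)}, \qquad R(\bm x) = \frac{f(\bm x)}{\bigl(\sum_i \sqrt{x_i}\bigr)^{2k}},
\]
where $C_k>0$ depends only on $k$. $R$ is homogeneous of degree $0$. A $(1-\eps)$-approximation of $\Gamma^*$ therefore yields a $(1-O(\eps))$-approximation of $\max R$, since the exponent $1/(1-k)$ is a fixed constant. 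It thus suffices to show that $\max R$ has no FPTAS.

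\textbf{Step 2: encode Motzkin--Straus.} Given a graph $G=([n],E)$, substitute $y_i=\sqrt{x_i}$. We want $R$ to become the Motzkin--Straus quadratic form
\[
\max_{\bm y\ge 0} \frac{\sum_{ij\in E} y_iy_j}{(\sum_i y_i)^2} = \tfrac12\Bigl(1-\tfrac1{\omega(G)}\Bigr).
\]
The values for $\omega$ and $\omega+1$ differ by $\tfrac{1}{2\omega(\omega+1)}\ge 1/(2n^2)$. Hence relative accuracy $\eps = 1/(4n^2)$ pins down $\omega(G)$, which completes the reduction.

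The first candidate is $f(\bm a) = \bigl(\sum_{ij\in E}\sqrt{a_{A_i}a_{A_j}}\bigr)^k$. It is homogeneous of degree $k$, and after the substitution $R$ is exactly the Motzkin--Straus ratio raised to the power $k$. This power is harmless for the gap.

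\textbf{Main obstacle: supermodularity.} The outer power $k<1$ in the candidate above introduces the negative cross-term $k(k-1)g^{k-2}g_ig_j$, where $g=\sum_{ij\in E}\sqrt{a_{A_i}a_{A_j}}$, so supermodularity may fail. I would first try the fully separable choice $f(\bm a)=\sum_{ij\in E}(a_{A_i}a_{A_j})^{k/2}$. Each term is Cobb--Douglas with positive cross-partial, so $f$ is supermodular and homogeneous of degree $k$. The resulting ratio is
\[
\frac{\sum_{ij\in E}(y_iy_j)^k}{\bigl(\sum_i y_i\bigr)^{2k}}.
\]
This is a generalized Motzkin--Straus form, and we must show that its maximum still separates $\omega$ from $\omega+1$ by an inverse-polynomial gap. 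I would try two routes. The first is to show that a maximizer is uniform on a maximum clique, via the usual "merge two nonadjacent vertices" argument, which still works for such power forms. The second is to take $k$ close enough to $1$ (say $1-k = 1/\mathrm{poly}(n)$) so that the form is uniformly within $o(1/n^2)$ of the quadratic Motzkin--Straus form on the normalized domain. Either route preserves the gap and hence rules out an FPTAS.
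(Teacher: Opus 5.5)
\textbf{There is a genuine gap: your first route is false, and your second, though completable, proves a weaker statement than the paper's.}

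Your first route, the merging argument, fails for every fixed $k<1$. Take nonadjacent $i,j$ and hold $y_i+y_j$ fixed. Along this transfer the form is $A y_i^k + B y_j^k$, with $A,B$ independent of $y_i,y_j$. This is strictly concave, so mass is pushed to spread over many edges rather than concentrate on a clique. For $k=1$ it is linear, which is exactly why Motzkin--Straus merging works. A concrete example: on the simplex, the star $K_{1,m}$ (with $\omega=2$) has maximum $m^{1-k}4^{-k}$. The triangle (with $\omega=3$) has maximum $3\cdot 9^{-k}$. For large $m$ the star wins. So for fixed $k<1$ the optimum of $\sum_{ij\in E}(y_iy_j)^k$ is not a function of $\omega(G)$, let alone one with an inverse-polynomial gap.

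Your second route can be completed. Let $t_e=y_iy_j$ and $S=\sum_{e\in E}t_e<1/2$ on the simplex. The power-mean inequality gives $S\le \sum_e t_e^k\le |E|^{1-k}S^k$. So the perturbation of the Motzkin--Straus value is at most $(n^{2(1-k)}-1)+(1-k)=O((1-k)\log n)$. Taking $1-k\le c/(n^2\log n)$ keeps it below the gap $1/(2n^2)$.

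With $k$ near $1$, your Step 1 claim that $1/(1-k)$ is a fixed constant is false. This is harmless: $\max R=\frac{1}{k}(k\Gamma^*/(1-k))^{1-k}$ and $(1-\eps)^{1-k}\ge 1-\eps$.

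The real cost is what you end up proving. You get hardness only for instance-dependent degrees $k_n\to 1$. This says nothing at any fixed degree, so it does not complement the $k<1/2$ tractability regime the way the paper's result does.

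The obstruction you hit is structural. Motzkin--Straus is a degree-$2$ over degree-$2$ ratio, so it lives at $k=1$. The paper removes it with Nesterov's cubic lifting, which is Motzkin--Straus on the complement graph lifted to a sphere.

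Add one agent per non-edge of $G$ and set $f(\bm a)=\sum_{\{i,j\}\notin E}(a_{A_i}a_{A_j}a_{A_{ij}})^{1/4}$. This is Cobb--Douglas, hence supermodular, and homogeneous of fixed degree $3/4$. Substitute $x_i=a_{A_i}^{1/4}$ and $y_{ij}=a_{A_{ij}}^{1/4}$. The penalty becomes $(\|x\|^2+\|y\|^2)^2$, and the output becomes a cubic form whose maximum on the unit sphere is $\sqrt{1-1/\alpha(G)}/(3\sqrt{3})$. Optimizing the scale then gives exactly $\Gamma^*=(1-1/\alpha(G))^2/(256\cdot 27)$, with no approximation in $k$.
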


The prove is by a reduction to max independent set and a result by \citep{nesterov2003random}. 

\begin{proof} 
We reduce from the maximum independent set problem. Given a graph $G = (V, E)$ with $|V| = n$ vertices, we construct an instance with $n + \binom{n}{2} - |E|$ agents. The production function is defined as
\[
f(a) = \sum_{\{i,j\} \notin E} (a_{A_i})^{1/4} \cdot (a_{A_j})^{1/4} \cdot (a_{A_{ij}})^{1/4},
\]
where the agents $\set{i}\cup \set{{ij}}$ are indexed by vertices $i \in V$ and non-edges $\{i,j\} \notin E$.

We perform the change of variables $x_i := a_{A_i}^{1/4}$ for vertex agents and $y_{ij} := a_{A_{ij}}^{1/4}$ for non-edge agents. The principal's problem becomes
\[
\max_{x, y \geq 0} \left\{ \sum_{\{i,j\} \notin E} y_{ij} x_i x_j - \left(\sum_i x_i^2 + \sum_{\{i,j\} \notin E} y_{ij}^2\right)^2 \right\}.
\]

By Nesterov's theorem (\Cref{thm:nesterov}), the inner maximization over the unit sphere satisfies
\[
\max_{\|x\|^2 + \|y\|^2 = 1} \sum_{\{i,j\} \notin E} y_{ij} x_i x_j = \frac{\sqrt{1 - 1/\alpha(G)}}{3\sqrt{3}},
\]
where $\alpha(G)$ denotes the stability number (maximum independent set size) of $G$.

Substituting and optimizing over the scale $t := \|x\|^2 + \|y\|^2$, the principal's problem reduces to
\[
\max_{t \geq 0} \left\{ \frac{\sqrt{1 - 1/\alpha(G)}}{3\sqrt{3}} \cdot t^{3/2} - t^2 \right\}.
\]
Solving the first-order condition yields the optimal value
\[
\frac{\left(1 - 1/\alpha(G)\right)^2}{256 \cdot 27}.
\]
An FPTAS for the principal's problem would then enable polynomial-time computation of $\alpha(G)$. 
\end{proof}

\begin{theorem}[\citealp{nesterov2003random}]\label{thm:nesterov}
Let $G = (V, E)$ be a graph with stability number $\alpha(G)$. Then
\[
\sqrt{1 - \frac{1}{\alpha(G)}} = 3\sqrt{3} \cdot \max_{\|\bm{x}\|^2 + \|\bm{y}\|^2 = 1} \sum_{\{i,j\} \notin E} y_{ij} x_i x_j.
\]
\end{theorem}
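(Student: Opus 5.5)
The plan is to decouple the two blocks of variables: first eliminate $\bm y$ in closed form, then reduce the remaining problem in $\bm x$ to the Motzkin--Straus theorem applied to the complement graph $\bar G$. Cliques of $\bar G$ are exactly the independent sets of $G$, so $\omega(\bar G)=\alpha(G)$.

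\emph{Step 1 (eliminate $\bm y$).} Fix $\bm x$ with $s:=\|\bm x\|^2\in[0,1]$, so that $\|\bm y\|^2 = 1-s$. The objective is linear in $\bm y$, namely $\langle \bm y, \bm v\rangle$ with $v_{ij}=x_ix_j$ for $\{i,j\}\notin E$. By Cauchy--Schwarz its maximum over this sphere is
\[
\sqrt{1-s}\,\Bigl(\sum_{\{i,j\}\notin E} x_i^2x_j^2\Bigr)^{1/2},
\]
attained at $\bm y\propto\bm v$. Nonnegativity of $\bm y$ is harmless: replacing $\bm x$ by $|\bm x|$ can only increase the objective, and then $\bm v\ge 0$.

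\emph{Step 2 (reduce to the simplex).} Substitute $z_i = x_i^2/s$, so that $\bm z$ ranges over the standard simplex. Then
\[
\sum_{\{i,j\}\notin E} x_i^2x_j^2 = s^2\, q(\bm z), \qquad q(\bm z):=\sum_{\{i,j\}\in E(\bar G)} z_iz_j .
\]
The problem therefore separates as
\[
\Bigl(\max_{s\in[0,1]} s\sqrt{1-s}\Bigr)\cdot\Bigl(\max_{\bm z\in\Delta}q(\bm z)\Bigr)^{1/2}.
\]
The scalar factor is elementary. Setting the derivative to zero gives $s=2/3$, with value $\tfrac{2}{3\sqrt3}$.

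\emph{Step 3 (Motzkin--Straus).} Apply the Motzkin--Straus theorem to $\bar G$, which gives
\[
\max_{\Delta}\sum_{i<j,\ \{i,j\}\in E(\bar G)} z_iz_j = \tfrac12\bigl(1-1/\omega(\bar G)\bigr)=\tfrac12\bigl(1-1/\alpha(G)\bigr).
\]
I would either cite this or reprove it. A reproof shows that some maximizer is supported on a clique: if two support vertices are non-adjacent, shifting mass between them is linear in the shift, so one can move to an endpoint without loss. On a $k$-clique the value is $\tfrac12(1-1/k)$, which is maximized at $k=\omega(\bar G)$. Combining the three steps gives a value proportional to $\sqrt{1-1/\alpha(G)}$, which is the claimed form.

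\emph{Main obstacle.} The genuine difficulty is the Motzkin--Straus step; everything else is calculus. I would treat it as a known black box.

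There is also a bookkeeping issue with the constant, which I would check carefully before finalizing. With the sum taken over unordered non-edges $i\neq j$, the steps above give
\[
\frac{2}{3\sqrt3}\cdot\frac{1}{\sqrt2}\sqrt{1-1/\alpha(G)} = \frac{\sqrt2}{3\sqrt3}\sqrt{1-1/\alpha(G)},
\]
not the stated $\frac{1}{3\sqrt3}\sqrt{1-1/\alpha(G)}$. The stated constant therefore presumably reflects a different convention in \citealp{nesterov2003random}. Possibilities include ordered pairs, a coefficient on the cubic form, or another normalization of $\bm y$. I would match that convention exactly rather than rederive it loosely.

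For the reduction in the proof of Theorem~\ref{thm:team-no-fptas}, only the monotone dependence on $\alpha(G)$ matters, so a wrong constant does not break the hardness argument. It would, however, change the explicit optimal value $\frac{(1-1/\alpha(G))^2}{256\cdot 27}$ computed there.
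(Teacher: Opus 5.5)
The paper gives no proof of this statement. It quotes it from Nesterov and uses it as a black box in Theorem~\ref{thm:team-no-fptas}, so the question is only whether your argument is sound, and it is. Cauchy--Schwarz in $\bm y$ gives $\sqrt{1-s}\,\bigl(\sum_{\{i,j\}\notin E}x_i^2x_j^2\bigr)^{1/2}$ with $s=\|\bm x\|^2$. The substitution $z_i=x_i^2/s$ then splits the problem into two independent factors, $\max_{s\in[0,1]}s\sqrt{1-s}=\tfrac{2}{3\sqrt3}$ and $\bigl(\max_{\bm z\in\Delta}q(\bm z)\bigr)^{1/2}$. Motzkin--Straus on $\bar G$ gives $\max_{\Delta}q=\tfrac12(1-1/\alpha(G))$. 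Your sketched reproof of Motzkin--Straus also works: whenever $i,j$ are non-adjacent in $\bar G$, $q$ has no $z_iz_j$ term, so $q$ is affine along $e_i-e_j$ and mass can be pushed to an endpoint.

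The one thing to change is your hedging about the constant. Your value $\frac{\sqrt2}{3\sqrt3}\sqrt{1-1/\alpha(G)}=\sqrt{\tfrac{2}{27}(1-1/\alpha(G))}$ is correct. The printed identity is off by a factor $\sqrt2$: it should have $\sqrt{27/2}$ where it has $3\sqrt3$. There is no convention to search for. Take the edgeless graph on two vertices, so $\alpha(G)=2$. The problem is to maximize $y_{12}x_1x_2$ over the unit sphere of $\bR^3$, and by AM--GM the maximum is $1/(3\sqrt3)$. The right-hand side is then $1$, while the left-hand side is $1/\sqrt2$. The printed constant is recovered only if $\|\bm y\|^2$ counts each $y_{ij}$ twice, which is not what the display says.

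Your downstream remark is also right. With the corrected sphere value $c$, $\max_{t\ge 0}\{c\,t^{3/2}-t^2\}=\tfrac{27}{256}c^4=\frac{(1-1/\alpha(G))^2}{64\cdot 27}$, not $\frac{(1-1/\alpha(G))^2}{256\cdot 27}$. This is still strictly increasing in $\alpha(G)$, so the no-FPTAS reduction goes through unchanged.
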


Despite this hardness result, we identify a tractable regime: when the homogeneity degree is less than $1/2$, the principal's optimization problem reduces to concave maximization and becomes polynomially tractable.

\begin{theorem}[Tractability Regime]\label{thm:team-tractable-regime}
If $f(\bm{a})$ is supermodular and homogeneous of degree $k < 1/2$, then the principal's optimization problem can be solved via concave maximization.
\end{theorem}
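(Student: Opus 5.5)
By \Cref{prop:team-production-optimal-gross-output}, the principal's problem is
\[
\Gamma^* = \max_{\bm{a} \geq 0} \Bigl\{ f(\bm{a}) - \Bigl(\textstyle\sum_i \sqrt{a_{A_i}}\Bigr)^2 \Bigr\}.
\]
This objective is not concave in $\bm{a}$: $(\sum_i \sqrt{a_{A_i}})^2$ is concave, so subtracting it is convex. The natural fix, suggested by the hardness proof, is to change variables $x_i = \sqrt{a_{A_i}}$, i.e.\ $a_{A_i} = x_i^2$. The objective becomes
\[
g(\bm{x}) - \Bigl(\textstyle\sum_i x_i\Bigr)^2, \qquad g(\bm{x}) := f(x_1^2, \ldots, x_n^2).
\]
The penalty $(\sum_i x_i)^2$ is convex, so its negative is concave. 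The plan is therefore to show that $g$ is concave on $\bR_+^n$ whenever $f$ is supermodular and homogeneous of degree $k<1/2$. Then the whole problem is a concave maximization over the convex cone $\bR_+^n$, and the optimal $\bm{a}$ and shares $\phi_{A_i} \propto x_i$ are recovered directly.

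The key step is the concavity of $g$. Note that $g$ is homogeneous of degree $2k<1$ and nonnegative. I would use the standard fact that a nonnegative function homogeneous of degree $p\in(0,1]$ whose upper level sets are convex (quasi-concave) is concave. So it suffices to show $g$ is quasi-concave, and I would do this through the Hessian. Write $\nabla^2 g = 4D\,\nabla^2 f\,D + 2\operatorname{diag}(\partial_i f)$ with $D=\operatorname{diag}(x_i)$. Here supermodularity gives nonnegative off-diagonal entries of $\nabla^2 f$, and homogeneity gives Euler identities such as $\nabla^2 f\, \bm{a} = (k-1)\nabla f$.

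An alternative route avoids the Hessian. Supermodular, homogeneous (and, as assumed throughout, concave) $f$ composed with the coordinatewise map $x\mapsto x^2$ can be viewed as $f$ evaluated at $(\bm{x}^{2})$. One can then write $g(\bm{x})=h(\bm{x})^{2k}$ with $h=g^{1/(2k)}$ homogeneous of degree $1$. The task becomes showing $h$ is quasi-concave, i.e.\ that $f^{1/k}(\bm{x}^2)$ has convex superlevel sets. The condition $k<1/2$ should enter exactly when the convex curvature of $x_i\mapsto x_i^2$ (weighted by $\partial_i f$) is dominated by the negative curvature $k-1$ of $f$ along rays. This matches the Cobb–Douglas sanity check: $f=\prod a_i^{\beta_i}$ gives $g=\prod x_i^{2\beta_i}$, which is concave iff $\sum 2\beta_i\le 1$, i.e.\ $k\le 1/2$.

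I expect the main obstacle to be this concavity of $g$ for general supermodular homogeneous $f$. The Cobb–Douglas computation suggests the threshold is tight. In general, supermodularity must be used to control the cross terms $D\nabla^2 f D$ against the diagonal term $2\operatorname{diag}(\partial_i f)$. First I would try to prove $\bm{v}^\top\nabla^2 g\,\bm{v}\le 0$ by splitting $\bm{v}$ into its component along the ray $\bm{x}$ and the orthogonal part. Along the ray, the second derivative is $2k(2k-1)g/\|\bm{x}\|^2\le 0$. The mixed and transverse terms would then be bounded using the Euler relations and the sign information from supermodularity. If this direct bound fails, I would fall back on the level-set argument: I would show the superlevel sets $\{\bm{x}: f(\bm{x}^2)\ge 1\}$ are convex, using that $\{\bm{a}: f(\bm{a})\ge 1\}$ is convex and upward closed (monotone $f$) together with the concavity of $t\mapsto\sqrt{t}$. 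Once $g$ is concave, the conclusion follows immediately from the displayed reformulation.
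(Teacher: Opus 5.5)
\textbf{The concavity of $g$ is never proved.} Your reformulation matches the paper: substituting $x_i=\sqrt{a_{A_i}}$ and reducing everything to concavity of $g(\bm{x})=f(x_1^2,\dots,x_n^2)$. But that concavity is the whole content of the theorem, and you leave it as ``the main obstacle'' with no completed argument.

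The Hessian route stops at ``bound the mixed and transverse terms'' without saying how. The quasi-concavity reduction (nonnegative, homogeneous of degree at most one, quasi-concave implies concave) is a valid fact. However, you never establish quasi-concavity of $g$.

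The level-set fallback cannot work. It uses only that $\{f\ge 1\}$ is convex and upward closed. Coordinatewise squaring gives $(\lambda\bm{x}+(1-\lambda)\bm{y})^2\le\lambda\bm{x}^2+(1-\lambda)\bm{y}^2$, which is the wrong direction for an upward-closed set. Here is a concrete counterexample. Take $f(\bm{a})=(a_1+a_2)^k$ with $k<1/2$. It is homogeneous, concave and monotone, with convex upward-closed superlevel sets. Yet $g(\bm{x})=(x_1^2+x_2^2)^k$ has superlevel sets $\{x_1^2+x_2^2\ge c\}$, which are not convex. That $f$ is submodular, so any correct proof must use supermodularity in an essential way, and your fallback never does.

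\textbf{The missing observation is that $g$ inherits supermodularity.} It is visible in your own formula $\nabla^2 g=4D\,\nabla^2 f\,D+2\operatorname{diag}(\partial_i f)$. The off-diagonal entries are $4x_ix_j\,\partial_{ij}f(\bm{x}^2)\ge 0$. So $g$ is supermodular and homogeneous of degree $2k<1$. The paper's proof then applies \Cref{thm:supermodular-concave} directly to $g$ and is done.

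If you want this self-contained, take the following steps.
\begin{enumerate}[label=(\roman*)]
\item Euler's identity for $g$ gives $\nabla^2 g(\bm{x})\,\bm{x}=(2k-1)\nabla g(\bm{x})$.
\item This vector is componentwise nonpositive, because $\nabla g=2D\nabla f(\bm{x}^2)\ge 0$. Here $f$ is nondecreasing, since it is nonnegative and concave on the orthant.
\item On the interior, $B=D^{-1}\nabla^2 g\,D$ therefore has nonnegative off-diagonal entries and nonpositive row sums.
\item Gershgorin's theorem places every eigenvalue of $B$ in the closed left half-plane. Since $B$ is similar to the symmetric matrix $\nabla^2 g$, the latter is negative semidefinite.
\item Continuity extends concavity to the closed orthant.
\end{enumerate}
In this argument $k<1/2$ enters precisely as the sign of $2k-1$, with no need for the quasi-concavity detour.
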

\begin{proof} 
Apply the change of variables $z_i := \sqrt{a_{A_i}}$, so that $a_{A_i} = z_i^2$. The principal's objective becomes
\[
f(z_1^2, \ldots, z_n^2) - \left(\sum_i z_i\right)^2.
\]
Since $f$ is supermodular and homogeneous of degree $k$, the composition $g(\bm{z}) := f(z_1^2, \ldots, z_n^2)$ is homogeneous of degree $2k < 1$ in $\bm{z}$. By \Cref{thm:supermodular-concave}, any supermodular function with homogeneity degree $\ge 1$ is concave, hence $g(\bm{z})$ is concave. The second term $(\sum_i z_i)^2$ is convex, so the overall objective is concave in $\bm{z}$. The problem thus reduces to maximizing a concave function over the non-negative orthant, which is tractable.
\end{proof}



\subsubsection{Approximation Guarantee}
We now study the approximation guarantee of linear contracts. We first define the first-best gross output subject to budget balance: 
\[
V^*_{\FB} = \max_{\bm{a}} \left\{ f(\bm{a})  \right\}, \text{s.t. } f(\bm{a}) = \sum_i a_i. 
\]
The second-best problem using linear contracts is
\[
V^*_{\LIN} = \max_{\bm{\phi} \in \Delta} \left\{ f(\bm{a}(\bm{\phi})) \right\}.
\]
By \Cref{prop:team-production-optimal-gross-output},
\[
V^*_{\LIN} = \max_{\bm{a}\ge 0}\left[f(\bm{a}) - (\sum_i \sqrt{a_{A_i}})^2 \right] / (1-k)
\]
By \Cref{cor:team-cauchy}, 
\[
V^*_{\LIN} \ge \max_{\bm{a}\ge 0}\left[f(\bm{a}) - n \sum_i a_{A_i} \right] / (1-k)
\]

\begin{theorem}
The approximation ratio of linear contracts is
\[
\frac{V^*_{\LIN}}{V^*_{\FB}} \ge \left(\frac{k}{n}\right)^{\frac{k}{1-k}}.
\]
\end{theorem}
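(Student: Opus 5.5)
We want to show $V^*_{\LIN}/V^*_{\FB} \ge (k/n)^{k/(1-k)}$. The plan is to lower bound $V^*_{\LIN}$ through \Cref{cor:team-cauchy} and to compute $V^*_{\FB}$ exactly, using homogeneity in both places so that each optimum reduces to a one-dimensional scaling along a fixed direction. From \Cref{prop:team-production-optimal-gross-output} and \Cref{cor:team-cauchy} we already have
\[
V^*_{\LIN} \ge \frac{1}{1-k}\max_{\bm{a}\ge 0}\Bigl[f(\bm{a}) - n\sum_i a_{A_i}\Bigr].
\]
Write $\bm{a} = t\bm{b}$ with $\bm{b}$ in the simplex $\{\bm{b}\ge 0: \sum_i b_i = 1\}$ and $t \ge 0$. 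Then the bracket is $t^k f(\bm{b}) - nt$. Maximizing over $t$ is elementary: the optimum is $t^* = (k f(\bm{b})/n)^{1/(1-k)}$, with value
\[
(1-k)\, f(\bm{b})^{1/(1-k)}\,(k/n)^{k/(1-k)}.
\]
After dividing by $1-k$, this gives
\[
V^*_{\LIN} \ge (k/n)^{k/(1-k)} \max_{\bm{b}} f(\bm{b})^{1/(1-k)}.
\]

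Next we compute $V^*_{\FB}$ with the same parametrization. The budget constraint $f(\bm{a}) = \sum_i a_{A_i}$ becomes $t^k f(\bm{b}) = t$, so $t = f(\bm{b})^{1/(1-k)}$. The output is then $f(\bm{a}) = t = f(\bm{b})^{1/(1-k)}$. We set aside the trivial solution $t = 0$, which gives zero output. Hence
\[
V^*_{\FB} = \max_{\bm{b}} f(\bm{b})^{1/(1-k)},
\]
which is the same quantity that appears in the lower bound on $V^*_{\LIN}$. Dividing the two expressions yields the ratio $(k/n)^{k/(1-k)}$.

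The computations above are routine. The main point needing care is to check that the maxima over the simplex are attained and common to both problems, and that the direction $\bm{b}$ optimal for $V^*_{\FB}$ is feasible in the lower bound for $V^*_{\LIN}$. Attainment holds because $f$ is continuous and the simplex is compact. Feasibility holds because the bound from \Cref{cor:team-cauchy} is a maximum over all $\bm{a}\ge 0$, so we may restrict to the ray through the first-best $\bm{b}$ and obtain at least the stated value.

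We also need the identity $V^*_{\LIN} = \Gamma^*/(1-k)$ used in the text. It follows from part 1 of \Cref{prop:team-production-optimal-gross-output}, because maximizing $f(\bm{a}(\bm{\phi}))$ over $\bm{\phi}\in\Delta$ is the same as maximizing $\Gamma$ over $\bm{\phi}\in\Delta$.
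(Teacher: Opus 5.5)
Your proof is correct. It reaches the same witness point as the paper's proof by a different, derivative-free route.

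The paper works through first-order conditions. It takes $\nabla f(\bm{a}_{\FB}) = k\cdot\bm{1}$ from the Lagrangian of the first-best problem plus Euler's theorem. It then uses that $\nabla f$ is homogeneous of degree $k-1$ to locate $\tilde{\bm{a}} = (k/n)^{1/(1-k)}\bm{a}_{\FB}$, where $\nabla f = n\cdot\bm{1}$. Applying Euler's theorem again gives $f(\tilde{\bm{a}}) - n\sum_i \tilde{a}_i = (1-k) f(\tilde{\bm{a}})$.

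You instead write $\bm{a} = t\bm{b}$ with $\bm{b}$ in the simplex and optimize the scale in closed form in both problems. Along the first-best ray, your optimal scale $t^* = (k f(\bm{b})/n)^{1/(1-k)}$ is exactly $(k/n)^{1/(1-k)}$ times the first-best scale $f(\bm{b})^{1/(1-k)}$. So your point coincides with the paper's $\tilde{\bm{a}}$.

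What your version buys:
\begin{itemize}
\item It uses only homogeneity of $f$. There is no differentiability, no Lagrange multipliers or constraint qualification, and no tacit assumption that $\bm{a}_{\FB}$ is interior; the identity $\nabla f(\bm{a}_{\FB}) = k\cdot\bm{1}$ can only be expected on the support of $\bm{a}_{\FB}$.
\item It does not need the maxima over the simplex to be attained, since everything goes through with suprema.
\item It shows that $\frac{1}{1-k}\max_{\bm{a}\ge 0}[f(\bm{a}) - n\sum_i a_{A_i}]$ equals $(k/n)^{k/(1-k)} V^*_{\FB}$ exactly. So this is the best bound obtainable from \Cref{cor:team-cauchy}, and any improvement would have to come from sharpening the Cauchy--Schwarz step.
\end{itemize}
The paper's version is shorter under smoothness. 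It also matches the FOC-plus-scaling template of its bilateral and common-agency approximation proofs.

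One wording fix: the final division gives $V^*_{\LIN}/V^*_{\FB} \ge (k/n)^{k/(1-k)}$, not an equality, since you only bounded $V^*_{\LIN}$ from below.
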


\begin{proof}
At the first best, by the Lagrangian condition and Euler's theorem:
\[
\nabla f(\bm{a}_{\FB}) = k \cdot \bm{1}.
\]

Now consider $\tilde{\bm{a}}$ where:
\[
\nabla f(\tilde{\bm{a}}) = n \cdot \bm{1}.
\]

Since $\nabla f$ is homogeneous of degree $k-1$, we have $\tilde{\bm{a}} = t \cdot \bm{a}_{\FB}$ where:
\[
t^{k-1} \cdot k = n \implies t = \left(\frac{k}{n}\right)^{\frac{1}{1-k}}.
\]

By Euler's theorem at $\tilde{\bm{a}}$:
\[
f(\tilde{\bm{a}}) - n\sum_i \tilde{a}_i = f(\tilde{\bm{a}}) - k \cdot f(\tilde{\bm{a}}) = (1-k)f(\tilde{\bm{a}}).
\]

Since $f(\tilde{\bm{a}}) = t^k V^*_{\FB}$:
\[
\frac{f(\tilde{\bm{a}}) - n\sum_i \tilde{a}_i}{1-k} = \left(\frac{k}{n}\right)^{\frac{k}{1-k}} V^*_{\FB}.
\]

Therefore:
\[
\frac{V^*_{\LIN}}{V^*_{\FB}} \ge \left(\frac{k}{n}\right)^{\frac{k}{1-k}}.
\]
\end{proof}

\section{Proof of Auxillary Results}

\subsection{Improving Affine to Linear Contracts}
\label{proof:affine-to-linear}
\begin{lemma}\label{lem:affine-to-linear}
Suppose the agent's action set $\A$ is downward-closed, meaning if $a \in \A$, then $\rho a \in \A$ for all $\rho \in (0,1)$. Among all non-negative affine contracts that induce a given action $a$, the linear contract $\phi = \nabla c(a)$ induces $a$ with the minimum expected payment to the agent.
\end{lemma}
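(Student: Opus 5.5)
Let $\psi = (\psi_0, \psi_{1:d})$ be a non-negative affine contract that induces $a$. I will compare its expected payment $\psi_0 + \inner{\psi_{1:d}}{a}$ (well defined by \Cref{obs:invariance}) with that of $\phi = \nabla c(a)$, namely $\inner{\nabla c(a)}{a}$. The argument has three steps: (i) check that $\phi$ is admissible and induces $a$; (ii) show $\psi_0 \geq 0$; (iii) show $\inner{\psi_{1:d}}{a} \geq \inner{\nabla c(a)}{a}$ using downward-closedness.

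\textbf{Step (i).} Since $c$ is strictly increasing and differentiable, $\nabla c(a) \geq 0$, so $\phi \in (\R_+)^d$ and $\phi(x) \geq 0$ on $\X$. The agent's objective $\inner{\phi}{a'} - c(a')$ is strictly concave in $a'$. Its gradient vanishes at $a' = a$. Hence $a$ is the unique maximizer on $\A$, so $\phi$ induces $a$.

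\textbf{Step (ii).} Non-negativity of $\psi$ at $x = 0$ gives $\psi_0 \geq 0$, provided $0 \in \X$. If $0 \notin \X$, I would instead argue at the level of expected payments: $\psi_0 + \inner{\psi_{1:d}}{a'} \geq 0$ for every $a' \in \A = \operatorname{conv}(\X)$. This lets me replace $\psi_0$ with its effective contribution at the relevant points. I expect the cleanest route is simply to assume $0 \in \X$, consistent with $\A$ being downward closed and $\A = \operatorname{conv}(\X)$.

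\textbf{Step (iii).} This is the key comparison and the main obstacle. Since $\psi$ induces $a$ and $\rho a \in \A$ for $\rho \in (0,1)$, optimality of $a$ against these deviations gives
\[
\psi_0 + \inner{\psi_{1:d}}{a} - c(a) \;\geq\; \psi_0 + \rho\inner{\psi_{1:d}}{a} - c(\rho a).
\]
Rearranging,
\[
(1-\rho)\inner{\psi_{1:d}}{a} \geq c(a) - c(\rho a).
\]
Divide by $1-\rho$ and let $\rho \uparrow 1$. Differentiability of $c$ gives $\frac{c(a)-c(\rho a)}{1-\rho} \to \inner{\nabla c(a)}{a}$. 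Hence $\inner{\psi_{1:d}}{a} \geq \inner{\nabla c(a)}{a}$.

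Note that this uses only radial deviations, which is exactly what downward-closedness permits. It therefore sidesteps the subtlety that at a boundary point of $\A$, $\psi_{1:d}$ need not equal $\nabla c(a)$; only its projection along $a$ is controlled. The point to double-check is the one-sided limit: $\rho a \to a$ from inside the feasible set. This is fine because $c$ is differentiable on its domain, so the directional derivative along $-a$ equals $-\inner{\nabla c(a)}{a}$. If $a = 0$ the claim is trivial.

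Combining steps (ii) and (iii),
\[
\psi_0 + \inner{\psi_{1:d}}{a} \geq \inner{\nabla c(a)}{a},
\]
so $\phi$ induces $a$ with minimum expected payment, which proves the lemma.
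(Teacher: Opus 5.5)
Your proof is correct and follows essentially the paper's argument. The paper plugs the downward-closed deviation $a/2$ into the first-order variational inequality $\inner{\psi_{1:d} - \nabla c(a)}{a' - a} \le 0$ to get $\inner{\psi_{1:d}}{a} \ge \inner{\nabla c(a)}{a}$, then invokes limited liability for $\psi_0 \ge 0$. You obtain the same radial inequality directly from the difference quotient as $\rho \uparrow 1$, which is exactly how that inequality is justified along the segment $[a/2, a] \subseteq \A$. Your treatment of $\psi_0 \ge 0$ is more careful than the paper's bare assertion, and the case $0 \notin \X$ closes cleanly without your vague fallback: $\psi \ge 0$ on $\X$ extends to $\operatorname{conv}(\X)$, which the paper takes to equal $\A$ and which contains $\rho a$ for every $\rho \in (0,1)$, so letting $\rho \downarrow 0$ in $\psi_0 + \rho \inner{\psi_{1:d}}{a} \ge 0$ gives $\psi_0 \ge 0$.
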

\begin{proof} 
For any affine contract $\psi(x) = \psi_0 + \langle \psi_{1:d}, x \rangle$, the agent's expected utility from action $a$ is
\[
U(a; \psi) = \langle \psi_{1:d}, a \rangle + \psi_0 - c(a),
\]
which is strictly concave in $a$.
First, observe that the linear contract $\phi = \nabla c(a)$ indeed induces action $a$: the first-order condition $\nabla c(a) = \phi$ is satisfied with equality.
Now fix any action $a^\dagger$ and suppose affine contract $\psi^\dagger$ induces $a^\dagger$, i.e., $a^\dagger \in \arg\max_{a \in \mathcal{A}} U(a; \psi^\dagger)$. The first-order optimality condition requires
\[
\langle \psi^\dagger_{1:d} - \nabla c(a^\dagger), a - a^\dagger \rangle \leq 0, \quad \forall a \in \mathcal{A}.
\]
Since $\mathcal{A}$ is downward-closed, $a = a^\dagger / 2 \in \mathcal{A}$. Substituting:
\[
\langle \psi^\dagger_{1:d} - \nabla c(a^\dagger), -a^\dagger / 2 \rangle \leq 0,
\]
which implies
\[
\langle \psi^\dagger_{1:d}, a^\dagger \rangle \geq \langle \nabla c(a^\dagger), a^\dagger \rangle.
\]
The expected payment to the agent under $\psi^\dagger$ is $\langle \psi^\dagger_{1:d}, a^\dagger \rangle + \psi^\dagger_0$. Limited liability requires $\psi^\dagger_0 \geq 0$. Therefore,
\[
\langle \psi^\dagger_{1:d}, a^\dagger \rangle + \psi^\dagger_0 \geq \langle \nabla c(a^\dagger), a^\dagger \rangle + 0 = \langle \nabla c(a^\dagger), a^\dagger \rangle.
\]
The linear contract $\phi = \nabla c(a^\dagger)$ achieves this lower bound with payment exactly $\langle \nabla c(a^\dagger), a^\dagger \rangle$.
\end{proof}

\subsection{Choquet's Assertion}
\label{proof:choquet}
\begin{lemma}
Let $H(x)$ be a symmetric square matrix with diagonal entries non-positive and off-diagonal entries non-negative.
Suppose for each $x > 0$ there exists a positive vector $v$ such that $H(x)v = 0$.
Then $H(x)$ is negative semidefinite.
\end{lemma}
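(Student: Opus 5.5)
The plan is a direct quadratic-form computation. It reweights an arbitrary test vector by the positive null vector, which rewrites $y^\top H(x) y$ as a sum of nonpositive terms, the same way a weighted graph Laplacian is shown to be semidefinite. Fix $x>0$ and write $H=H(x)$. Let $v$ be the positive vector with $Hv=0$. Since every $v_i>0$, any $y\in\R^n$ can be written uniquely as $y_i = v_i z_i$ with $z_i = y_i/v_i$. The goal is to show $y^\top H y \le 0$ for all such $y$.

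The first step expands the form as
\[
y^\top H y = \sum_i H_{ii} v_i^2 z_i^2 + \sum_{i\neq j} H_{ij} v_i v_j z_i z_j .
\]
Next, the null-vector condition $(Hv)_i = 0$ reads $H_{ii} v_i = -\sum_{j\neq i} H_{ij} v_j$. Multiplying by $v_i z_i^2$ and substituting into the diagonal part gives
\[
y^\top H y = \sum_{i\neq j} H_{ij} v_i v_j \bigl(z_i z_j - z_i^2\bigr).
\]
Then I symmetrize using $H_{ij}=H_{ji}$: averaging the sum with its copy under $i\leftrightarrow j$ yields
\[
y^\top H y = -\tfrac12 \sum_{i\neq j} H_{ij} v_i v_j (z_i - z_j)^2 .
\]
Each term has $H_{ij}\ge 0$ off the diagonal, $v_iv_j>0$, and $(z_i-z_j)^2\ge 0$, so the right-hand side is $\le 0$. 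This proves negative semidefiniteness for each $x>0$.

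I expect no serious obstacle. The only care needed is in the bookkeeping of the symmetrization step, namely checking that $\sum_{i\neq j}H_{ij}v_iv_j z_i^2$ splits evenly into the $z_i^2$ and $z_j^2$ halves. The hypothesis that the diagonal entries are nonpositive is in fact implied by $Hv=0$ together with the off-diagonal nonnegativity, so it is not used separately.

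As a fallback, I would use a Perron--Frobenius argument. Shift to $M = H + sI$, which is nonnegative for large $s$. Since $Mv = sv$ with $v>0$, the Perron root of $M$ is $s$. Because $H$ is symmetric, all eigenvalues of $M$ are real and at most $s$, so all eigenvalues of $H$ are $\le 0$.
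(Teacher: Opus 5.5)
Your argument is correct, and it takes a different route from the paper. The paper conjugates by $D=\mathrm{diag}(v)$. It observes that $B=D^{-1}H(x)D$ has zero row sums, with $b_{ii}\le 0$ and $b_{ij}\ge 0$. Gershgorin then places every eigenvalue in a disk $|z-b_{ii}|\le -b_{ii}$ contained in the closed left half-plane, and symmetry (real spectrum, preserved under similarity) gives $\lambda\le 0$. You instead prove the Laplacian-type identity $y^\top H y=-\tfrac12\sum_{i\ne j}H_{ij}v_iv_j(z_i-z_j)^2$, and your bookkeeping, including the symmetrization step, checks out.

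Your route is more elementary, with no spectral theory or similarity argument. It is also more informative. It writes the quadratic form explicitly as a negative combination of squares. It identifies the kernel as the vectors $y=Dz$ with $z_i=z_j$ whenever $H_{ij}>0$. And it shows the diagonal-sign hypothesis is redundant; the paper invokes it, although it also follows from the zero row sums.

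The paper's Gershgorin step, by contrast, never uses symmetry. It therefore proves more generally that any matrix with nonnegative off-diagonal entries and a strictly positive null vector has all eigenvalues in $\{\operatorname{Re}\lambda\le 0\}$; symmetry enters only at the last line.

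Your Perron--Frobenius fallback is essentially the paper's argument in another form. The bound $\rho(H+sI)\le\max_i ((H+sI)v)_i/v_i=s$ is exactly the weighted row-sum (Gershgorin) bound for $D^{-1}(H+sI)D$.
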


\begin{proof}
Let $D = \mathrm{diag}(v_1, \ldots, v_n)$ and define
\[
B = D^{-1} H(x) D = [b_{ij}], \qquad \text{so that } \ b_{ij} = \frac{h_{ij} v_j}{v_i}.
\]
Since $H(x)$ has non-positive diagonal entries and non-negative off-diagonal entries, it follows that
\[
b_{ii} \le 0, \quad b_{ij} \ge 0 \ \text{for all } i \neq j.
\]

From the condition $H(x)v = 0$, we have for each $i$,
\[
0 = (H(x)v)_i = \sum_{j} h_{ij} v_j = v_i \sum_{j} b_{ij},
\]
hence
\[
\sum_{j} b_{ij} = 0 \quad \Longrightarrow \quad b_{ii} = -\sum_{j \neq i} b_{ij}.
\]
Define the Gershgorin radius
\[
R_i = \sum_{j \neq i} |b_{ij}| = \sum_{j \neq i} b_{ij} = -b_{ii}.
\]

By the \emph{Gershgorin Circle Theorem}, every eigenvalue $\lambda$ of $B$ lies in one of the disks
\[
D_i = \{ z \in \mathbb{C} : |z - b_{ii}| \le R_i \}
     = \{ z : |z - b_{ii}| \le -b_{ii} \}.
\]
Each disk $D_i$ lies entirely in the closed left half-plane and touches the origin, since it corresponds to the interval $[2b_{ii}, 0]$ on the real line (with $b_{ii} < 0$).
Therefore, every eigenvalue of $B$ satisfies $\operatorname{Re}(\lambda) \le 0$.

Because $B$ and $H(x)$ are similar ($B = D^{-1} H(x) D$), they have the same eigenvalues.
Further since $H(x)$ is symmetric, all its eigenvalues are real, so $\lambda \le 0$ for all eigenvalues, implying that $H(x)$ is negative semidefinite.

Thus, under the stated conditions, $H(x)$ is negative semidefinite.
\end{proof}

\begin{theorem}\label{thm:supermodular-concave}
Let $f: \bbR_+^n \to \bbR_+$ be supermodular and homogeneous of degree $k \leq 1$. Then $f$ is concave.
\end{theorem}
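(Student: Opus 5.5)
The plan is to prove concavity through the Hessian. I assume, as the paper does for cost and utility functions, that $f$ is twice continuously differentiable on the open orthant $\bbR_{++}^n$ and continuous on $\bbR_+^n$. I also take $0 \le k \le 1$; the case $k<0$ is degenerate for a nonnegative production function and I would exclude it. It then suffices to show that $H(x) = \nabla^2 f(x)$ is negative semidefinite for every $x>0$. Concavity on the open convex set $\bbR_{++}^n$ follows, and it extends to the closed orthant by continuity of $f$.

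The key steps, in order, are as follows.

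\textbf{Off-diagonal signs.} Supermodularity of a $C^2$ function gives $h_{ij}(x) = \partial_i\partial_j f(x) \ge 0$ for $i \ne j$.

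\textbf{Euler identity for the gradient.} Each $\partial_i f$ is homogeneous of degree $k-1$, so Euler's theorem gives
\[
H(x)\,x = (k-1)\nabla f(x).
\]

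\textbf{Monotonicity, $\nabla f \ge 0$ on $\bbR_{++}^n$.} Since $\partial_i f$ is nondecreasing in each $x_j$ with $j\ne i$, lowering the other coordinates gives $\partial_i f(x) \ge \partial_i f(x_i e_i)$. Homogeneity and $f \ge 0$ give $f(te_i) = t^k f(e_i)$, hence $\partial_i f(x_i e_i) = k x_i^{k-1} f(e_i) \ge 0$. This step touches the boundary of the orthant, where differentiability is not assumed. I would handle it by working at $x_i e_i + \eps \sum_{j\ne i} e_j$ and letting $\eps \to 0$, or by a one-sided difference-quotient version of the same monotonicity argument.

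\textbf{Nonpositive diagonal.} The $i$-th row of the Euler identity gives
\[
h_{ii} x_i = (k-1)\partial_i f(x) - \sum_{j\ne i} h_{ij}x_j \le 0 ,
\]
so $h_{ii}\le 0$.

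\textbf{Gershgorin argument.} I would rerun the proof of the lemma in \Cref{proof:choquet} with $v = x$. Set $D=\mathrm{diag}(x)$ and $B = D^{-1}H(x)D$, so $b_{ij} = h_{ij}x_j/x_i$. The row sums are now
\[
\sum_j b_{ij} = \frac{(k-1)\partial_i f(x)}{x_i} \le 0
\]
rather than exactly zero. Hence the Gershgorin radius satisfies $R_i = \sum_{j\ne i} b_{ij} \le -b_{ii}$. Each disk $\{z : |z-b_{ii}|\le R_i\}$ therefore still lies in the closed left half-plane. Since $B$ is similar to the symmetric matrix $H(x)$, all eigenvalues of $H(x)$ are real and nonpositive, so $H(x)$ is negative semidefinite. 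When $k=1$ the row sums vanish and the lemma applies verbatim.

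An alternative route is to homogenize, replacing $f$ by the degree-one function $\tilde f(x,s) = s^{1-k}f(x)$, and apply the lemma as stated. This still needs $\nabla f\ge 0$ to get $\partial_s\partial_{x_i}\tilde f \ge 0$, so it saves nothing.

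The main obstacle is the monotonicity step, since it is where both nonnegativity of $f$ and boundary behaviour enter. If it proves delicate, I would instead state the theorem under the additional assumption that $f$ is nondecreasing, which holds for production functions in the paper's setting. Everything else is a direct adaptation of the existing lemma.
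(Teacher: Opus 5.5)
Your proof is correct, but for $k<1$ it takes a different route from the paper's.

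The paper proves only the case $k=1$, applying the Gershgorin lemma of \Cref{proof:choquet} with $v=x$ to $H(x)x=0$. It never checks that lemma's nonpositive-diagonal hypothesis; your row-by-row computation supplies it. The paper then says $k<1$ ``follows from a function composition''. A natural way to fill this in is $g(y)=f(y_1^{1/k},\dots,y_n^{1/k})$, which is supermodular and homogeneous of degree one, hence concave by the $k=1$ case. Then $f(x)=g(x_1^k,\dots,x_n^k)$ is concave provided $g$ is nondecreasing. That is automatic, since a concave function bounded below by $0$ on the orthant cannot decrease along a coordinate ray. This route reuses the lemma verbatim and gets monotonicity for free, but it needs $k>0$.

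You instead treat all $k\in[0,1]$ in one pass via $H(x)x=(k-1)\nabla f(x)$, noting that Gershgorin only needs the scaled row sums to be $\le 0$. The price is a separate monotonicity lemma, which is exactly where $f\ge 0$ enters. Use its difference-quotient form, $f(x+he_i)-f(x)\ge\bigl((x_i+h)^k-x_i^k\bigr)f(e_i)\ge 0$, obtained from increasing differences. Do not pass to a limit in $\partial_i f$, since that would require continuity of derivatives up to the boundary, which you have not assumed.

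Finally, excluding $k<0$ is necessary, not cosmetic. The function $f(x)=1/x_1$ for $x_1>0$ and $f(x)=0$ otherwise is nonnegative, modular (hence supermodular), homogeneous of degree $-1$, and not concave.
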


\begin{proof}
We give the proof for the case when degree $=1$. The case when degree $<1$ simply follows from a function composition. 
Take the gradient of the Euler's formula. Let $H(x)$ be the Hessian matrix. Then we know that
\[
H(x) x = 0.
\]
Further since the function is supermodular, the matrix $H(x)$ has non-negative off-diagonal entries. We can then directly apply the above lemma and know that $f(x)$ is concave. 
\end{proof}

\end{document}